\newcommand{\unsafe}{\mathcal{U}}
\newcommand{\unspt}{u}
\renewcommand{\Re}{\mathbb{R}}
\title{Linear Hybrid System Falsification With Descent}
\title{Linear Hybrid System Falsification Through Descent\thanks{This work was partially supported by a grant from the NSF Industry/University Cooperative Research Center (I/UCRC) on Embedded Systems at Arizona State University and NSF award CNS-1017074.}}
\author{Houssam Abbas \and Georgios Fainekos}
\institute{Arizona State University, Tempe, AZ, USA,\\
\email{\{hyabbas,fainekos\}@asu.edu}\\ 
%\and
%Arizona State University, Tempe, AZ, USA\\
%\email{fainekos@asu.edu}\\ 
}
\begin{document}
	\maketitle
%	\frontmatter
	
\begin{abstract}
	In this paper, we address the problem of local search for the falsification of hybrid automata with affine dynamics.
Namely, if we are given a sequence of locations and a maximum simulation time, we return the trajectory that comes the closest to the unsafe set.
In order to solve this problem, we formulate it as a differentiable optimization problem which we solve using Sequential Quadratic Programming.
The purpose of developing such a local search method is to combine it with high level stochastic optimization algorithms in order to falsify hybrid systems with complex discrete dynamics and high dimensional continuous spaces.
Experimental results indicate that indeed the local search procedure improves upon the results of pure stochastic optimization algorithms.	
\keywords{Model Validation and Analysis; Robustness; Simulation; Hybrid systems}
\end{abstract}
%\begin{IEEEkeywords}
%Model Validation and Analysis; Robustness; Simulation; Hybrid systems; 
%\end{IEEEkeywords}

%\IEEEpeerreviewmaketitle

	%=============================================================================================
	
\section{Introduction}

Despite the recent advances in the computation of reachable sets in medium to large-sized linear systems (about 500 continuous variables)  \cite{girardG_IFAC2008,AsarinDMT10atva}, the verification of hybrid systems through the computation of the reachable state space remains a challenging problem \cite{guernicG_CAV2009,Althoff2010nonlinear}.
To overcome this difficult problem, many researchers have looked into testing methodologies as an alternative.
Testing methodologies can be coarsely divided into two categories: robust testing \cite{GirardP06hscc,JuliusFALP07hscc,DangDMS08cdc} and systematic/randomized testing \cite{BranickyCLM06iee,PlakuKV09tacas,RizkBFS08cmsb,ZulianiPC10hscc}.

Along the lines of randomized testing, we investigated the application of Monte Carlo techniques \cite{NghiemSFIGP10hscc} and metaheuristics to the temporal logic falsification problem of hybrid systems.
In detail, utilizing the robustness of temporal logic specifications \cite{FainekosP09tcs} as a cost function, we managed to convert a decision problem, i.e., does there exist a trajectory that falsifies the system, into an optimization problem, i.e., what is the trajectory with the minimum robustness value?
The resulting optimization problem is highly nonlinear and, in general, without any obvious structure.
When faced with such difficult optimization problems, one way to provide an answer is to utilize some stochastic optimization algorithm like Simulated Annealing.

In our previous work \cite{NghiemSFIGP10hscc}, we treated the model of the hybrid system as a black box since a global property, such as convexity of the cost function, cannot be obtained, in general.
One question that is immediately  raised is whether we can use ``local" information from the model of the system in order to provide some guidance to the stochastic optimization algorithm.

In this paper, we set the theoretical framework to provide local descent information to the stochastic optimization algorithm.
Here, by local we mean the convergence to a local optimal point.
In detail, we consider the falsification problem of affine dynamical systems and hybrid automata with affine dynamics where the uncertainty is in the initial conditions.
In this case, the falsification problem reduces to an optimization problem where we are trying to find the trajectory that comes the closest to the unsafe set (in general, such a trajectory is not unique).
A stochastic optimization algorithm for the falsification problem picks a point in the set of initial conditions, simulates the system for a bounded duration, computes the distance to the unsafe set and, then, decides on the next point in the set of initial conditions to try.
Our goal in this paper is to provide assistance at exactly this last step.
Namely, how do we pick the next point in the set of initial conditions?
Note that we are essentially looking for a descent direction for the cost function in the set of initial conditions.

Our main contribution, in this paper, is an algorithm that can propose such descent directions. Given a test trajectory $s_{x_0}: \Re_+ \mapsto \Re^n$ starting from a point $x_0$, the algorithm tries to find some vector $d$ such that $s_{x_0+d}$ gets closer to the unsafe set than $s_{x_0}$. We prove that it converges to a local minimum of the robustness function in the set of initial conditions, and demonstrate its advantages within a stochastic falsification algorithm. The results in this paper will enable local descent search for the satisfaction of arbitrary linear temporal logic specifications, not only safety specifications.

\section{Problem Formulation}

The results in this paper will focus on the model of hybrid automata with affine dynamics. 
A hybrid automaton is a mathematical model that captures systems that exhibit both discrete and continuous dynamics. 
In brief, a {\it hybrid automaton} is a tuple 
\[\Hc = (X, L, E, Inv, Flow, Guard, Re)\] 
where $X \subseteq \Re^n$ is the state space of the system, $L$ is the set of control locations, $E \subseteq L \times L$ is the set of control switches, $Inv : L \rightarrow 2^X$ assigns an invariant set to each location, $Flow : L \times X \rightarrow \Re^n$ defines the time derivative of the continuous part of the state, $Guard : E \rightarrow 2^X$ is the guard condition that enables a control switch $e$ and, finally, $Re : X \times E \rightarrow X \times L$ is a reset map. 
Finally, we let $H = L \times X$ to denote the state space of the hybrid automaton $\Hc$.

Formally, the semantics of a hybrid automaton are given in terms of generalized or timed transition systems \cite{Henzinger96}.
For the purposes of this paper, we define a {\it trajectory} $\eta_{h_0}$ starting from a point $h_0 \in H$ to be a function $\eta_{h_0} : \Re_+ \rightarrow H$.
In other words, the trajectory points to a pair of control location - continuous state vector for each point in time: $\eta_{h_0}(t) = (l(t), s_{x_0}(t))$, where $l(t)$ is the location at time $t$, and $s_{x_0}(t)$ is the continuous state at time $t$.
We will denote by $\loc(\eta_{h_0}) \in L^* \cup L^\omega$ the sequence of control locations that the trajectory $\eta_{h_0}$ visits (no repetitions).
The sequence is finite when we consider a compact time interval $[0,T]$ and $\eta$ is not Zeno.

{\bf Assumptions:} In the following, we make a number of assumptions.
First, we assume that for each location $v \in L$ the system dynamics are affine, i.e., $\dot x = Flow(v,x) = A x + b$, where $A$ and $b$ are matrices of appropriate dimensions. 
Second, we assume that the guards in a location are non-overlapping and that the transitions are taken as soon as possible.
Thirdly, we assume that the hybrid automaton is deterministic, i.e., starting from some initial state, there exists a unique trajectory $\eta_{h_0}$ of the automaton. 
This will permit us to use directly results from \cite{JuliusFALP07hscc}.
We also make the assumption that the simulation algorithms for hybrid systems are well behaved. % \cite{SanfeliceT10automatica}.
That is, we assume that the numerical simulation returns a trajectory that remains close to the actual trajectory on a compact time interval.
To avoid a digression into unnecessary technicalities, we will assume that both the set of initial conditions and the unsafe set are included in a single (potentially different) control location.

Let $\unsafe \subseteq H$ be an unsafe set and let $D_\unsafe : H \mapsto \Re_+$ be the distance function to $\unsafe$, defined by
	\[ D_\unsafe(v,x) = \left \{
\begin{array}{cc} 
d_\unsafe(x) & \mbox{ if } v \in \proj_L(\unsafe) \\
+\infty & \mbox{ otherwise } 
\end{array} \right .
\]
where $\proj_L$ is the projection to the set of locations, $\proj_X$ is the projection to the continuous state-space and 
	\[ d_\unsafe(x) = \inf_{\unspt\in \unsafe} ||x-\unspt||. \]

\begin{defn} [Robustness]
Given a compact time interval $[0,T]$, we define the robustness of a system trajectory $\eta_{h}$ starting at some $h = (l,x)\in H$ to be $f(h) \triangleq \min_{0 \le t \le T} D_{\unsafe}(\eta_{h}(t))$. When $l$ is clear from the context, we'll write $f(x)$. 
\end{defn}

Our goal in this paper is to find operating conditions for the system which produce trajectories of minimal robustness, as they indicate potentially unsafe operation. This can be seen as a 2-stage problem: first, decide on a sequence of locations to be followed by the trajectory. Second, out of all trajectories following this sequence of locations, find the trajectory of minimal robustness. This paper addresses the second stage. The central step is the solution the following problem: 
\begin{prob}
Given a hybrid automaton $\Hc$, a compact time interval $[0,T]$, a set of initial conditions $H_0 \subseteq H$ and a point $h_0 = (l_0,x_0) \in H_0$ such that $0 < f(h_0) < +\infty$, find a vector $dx$ such that $h_0' = (l_0,x_0+dx)$, $\loc(
\eta_{h_0}) = \loc(\eta_{h_0'})$ and $f(h_0') \le f(h_0)$.
\label{pb:main}
\end{prob}

An efficient solution to Problem \ref{pb:main} may substantially increase the performance of the stochastic falsification algorithms by proposing search directions where the robustness decreases.
In summary, our contributions are:
\begin{itemize}
\item We formulate Problem \ref{pb:main} as a nonlinear optimization problem, which we prove to be differentiable w.r.t. the initial conditions. Thus it is solvable with standard optimizers.
\item We developed an algorithm, Algorithm~\ref{alg:red}, to find local minima of the robustness function.
\item We demonstrate the use of Algorithm~\ref{alg:red} in a higher-level stochastic falsification algorithm, and present experimental results to analyze its competitiveness against existing methods. 
\end{itemize}
	
		\section{Finding a descent direction}\label{sectionDescentDirection}

	Consider an affine dynamical system in $\Re^n$, 
	\[ \dot x = F(x)= Ax + b \]
	which we assume has a unique solution 
	\[s_{x_0}(t) = e^{At}x_0 + c(t) \]
	where $x_0 \in X_0$ is the initial state of the trajectory
	% and $c(t) = A^{-1}(e^{At}-I)b$. 
				
	Let $\unsafe \subset \mathbb{R}^n$ be the \emph{convex} set of bad states, and $\overline{\unsafe}$ its closure. 
	%----------------------------------------------------------------------------------------------
Note that even for linear systems, $f: X_0 \mapsto \Re_+$ is not necessarily differentiable or convex.
% but it \emph{is} Lipschitz-continuous.	
Our goal is to find the trajectory of minimum robustness. That is done by a local search over the set of initial conditions.
	
	Given an initial state $x_0$ and a trajectory $s_{x_0}$ that starts at $x_0$, define the time $t^*$ of its closest proximity to $\unsafe$, and the point $\unspt^* \in \overline{\unsafe}$ which is closest to the trajectory:
		
	\[t^* = arg\min_{t \geq 0} d_\unsafe(s_{x_0}(t)), \unspt^* = arg\min_{\unspt \in \overline{\unsafe}} ||s_{x_0}(t^*) - \unspt||\]

\subsection{Partial descent based at the nearest point}\label{subsecParDesc}	
	
	Given $t^*$, choose an \emph{approach vector} $d'$ such that $s_{x_0}(t^*) + d'$ is closer to $\unsafe$ than $s_{x_0}(t^*)$. Such a vector always exists given that $s_{x_0}$ has a positive distance to $\unsafe$. Moreover, it is not unique. Thus we have
	
	\[ f(x_0) = ||s_{x_0}(t^*) - \unspt^*|| > \min_\unspt ||s_{x_0}(t^*) +d' - \unspt|| \]
	
	Define $d = e^{-At^*}d'$. Then
	\begin{equation*}\label{eqDescent}
	\begin{split}
	f(x_0)& > \min_\unspt ||s_{x_0}(t^*) + d' - \unspt|| = \min_\unspt ||e^{At^*}x_0 + c(t) + e^{At^*}d - \unspt|| \\
				& \geq \min_t \min_\unspt ||(x_0 + d) e^{At} + c(t)  - \unspt|| = f(x_0 +d) \geq 0
	\end{split}
	\end{equation*}
	and $d$ is a descent direction, provided that $x_0 + d \in X_0$. \\
	
	It is easy to see that for any $x_0\in X_0$ and $d' \in \Re^n$, 
	\begin{equation*}\label{eqShift}
	s_{x_0+e^{-At^*}d'}(t) = s_{x_0}(t^*)+d'
	\end{equation*}
	so the new distance is achieved at the same time $t^*$ as the old one. This new distance $d_\unsafe(s_{x_0 + d}(t^*))$ is an upper \emph{bound} on the new trajectory's robustness. In general, the new trajectory's robustness might be even smaller, and achieved at some other time $t' \neq t^*$. 
	
	As pointed out earlier, the approach $d'$ is not unique. The requirement on $d'$ is that $s_{x_0}(t^*) + d'$ be closer to $\unsafe$ than $s_{x_0}(t^*)$. So define the set $P(x_0;t^*)$ of points that are closer to $\unsafe$ than $s_{x_0}(t^*)$ (see Fig. \ref{fig:PPc}):
	
	\begin{equation}\label{eqPDef}
	P(x_0; t^*) \triangleq \{x \in \mathbb{R}^n | d_\unsafe(x) \leq f(x_0)\} 
	\end{equation}
	
\begin{figure}[ht]
\centering
\includegraphics[width=120pt]{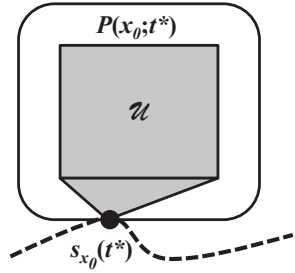}
\caption{The unsafe set $\unsafe$ and the set $P(x_0; t^*)$. The system trajectory $s_{x_0}$ appears as a dashed curve.}
\label{fig:PPc}
\end{figure}

	Then $d'$ must satisfy $s_{x_0}(t^*) + d' \in P(x_0; t^*) \Leftrightarrow d \in e^{-At^*}(P(x_0; t^*) - s_{x_0}(t^*))$. Combined with the requirement that $x_0 + d \in X_0$, we get
	
	\begin{equation*}\label{eqParDesc}
	d \in (X_0 - x_0) \bigcap e^{-At^*}[P(x_0;t^*) - s_{x_0}(t^*)] 
	\end{equation*}
	
	Any point in the above \emph{descent set} is a feasible descent direction. As a special case, it is easy to verify that $d = e^{-At^*}(\overline{\unspt} - s_{x_0}(t^*))$, for any $\overline{u} \in \overline{\unsafe}$, is a descent direction that leads to 0 robustness. 
	Coupled with the requirement that $x_0 + d$ must be in $X_0$, it comes
	\begin{equation*}\label{eqCompleteDesc}
	d \in  (X_0 - x_0) \bigcap e^{-At^*}(\overline{\unsafe} - s_{x_0}(t^*))
	\end{equation*}	 
	
	If computing $P$ is too hard, we can approximate it with the following $\unsafe^U$: 
	imagine translating $\unsafe$ along the direction $v = s_{x_0}(t^*) - \unspt^*$, so it is being drawn closer to $s_{x_0}(t^*)$, until it meets it. 
	Then we claim that the union of all these translates forms a set of points closer to $\unsafe$ than $s_{x_0}(t^*)$:
	
	\begin{proposition}\label{propBU}
	Let $\unsafe$ be a convex set, $s_{x_0}(t^*)$ a point outside it, and $\unsafe^U(v)$ be the Minkowski sum of $\unsafe$ and $\{\alpha v | \alpha \in [0,1]\}$.
	Then for any $p$ in $\unsafe^U(v)$, $d_\unsafe(p) \leq d_\unsafe(s_{x_0}(t^*))$
	\end{proposition}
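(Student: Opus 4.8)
The plan is to reduce the statement to a single one-line estimate built on one identity: because $\unspt^*$ is by definition the point of $\overline{\unsafe}$ nearest to $s_{x_0}(t^*)$, the approach vector has length exactly the quantity we must bound, namely $\|v\| = \|s_{x_0}(t^*) - \unspt^*\| = d_\unsafe(s_{x_0}(t^*))$. I would establish this identity first, noting that the distance to $\unsafe$ and the distance to its closure $\overline{\unsafe}$ coincide, so the minimizer $\unspt^*$ in the closure genuinely realizes $d_\unsafe(s_{x_0}(t^*))$. With this in hand the rest is short, so I do not anticipate a serious computational obstacle; the care is entirely in choosing the right witness.

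Next I would unfold the definition of the Minkowski sum. An arbitrary point $p \in \unsafe^U(v)$ can be written, by membership, as $p = \unspt_0 + \alpha v$ for some $\unspt_0 \in \unsafe$ and some $\alpha \in [0,1]$. The decisive move is to use the base point $\unspt_0$ of this translate as a witness for the distance of $p$ to $\unsafe$: since $d_\unsafe(p)$ is an infimum over all of $\unsafe$, it is bounded above by the distance to this one element,
\[ d_\unsafe(p) \;\le\; \|p - \unspt_0\| \;=\; \|\alpha v\| \;=\; \alpha\,\|v\|. \]

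I would then substitute the key identity and use $\alpha \le 1$ to close the argument:
\[ d_\unsafe(p) \;\le\; \alpha\,\|v\| \;\le\; \|v\| \;=\; d_\unsafe(s_{x_0}(t^*)), \]
which is precisely the claimed inequality. It is harmless that $\unspt^*$ lives in $\overline{\unsafe}$ while the witness $\unspt_0$ is taken in $\unsafe$, because of the coincidence of the two distances noted above; and since any admissible representation of $p$ has $\alpha \in [0,1]$, the bound does not depend on which representation we pick.

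The only point I would pause over is the role of convexity, and this is really where the ``obstacle'' lies, since there is no hard step in the calculation itself. The estimate above never invokes convexity: it holds for the swept region $\unsafe^U(v)$ of an \emph{arbitrary} set, provided only that a nearest point $\unspt^*$ attaining $d_\unsafe(s_{x_0}(t^*))$ exists. Convexity of $\unsafe$ is what guarantees that this nearest point is well defined (indeed unique), and it is the standing assumption of the section, so I would prove the inequality for general $\unsafe$ and simply remark that convexity secures the existence of $\unspt^*$ and hence of $v$. A stronger, geometric statement — that $\unsafe^U(v)$ sits flush against the supporting hyperplane at $\unspt^*$ — is where convexity and the projection characterization of $\unspt^*$ would genuinely be required, but that is not needed for the distance bound asserted here.
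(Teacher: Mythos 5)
Your proof is correct, and it is in fact leaner than the paper's. Both arguments rest on the same one-line witness estimate---bounding $d_\unsafe$ at a translated point by the length of the translation---but you apply it directly to the Minkowski decomposition $p = \unspt_0 + \alpha v$, taking the base point $\unspt_0 \in \unsafe$ itself as the witness:
\[ d_\unsafe(p) \;\le\; \|p-\unspt_0\| \;=\; \alpha\|v\| \;\le\; \|v\| \;=\; d_\unsafe(s_{x_0}(t^*)). \]
The paper instead takes a detour: it first proves the estimate only for translates of \emph{boundary} points $\overline{\unspt} \in \partial\unsafe$, then handles an arbitrary $p \in \unsafe^U \setminus \unsafe$ by arguing that the segment from $\unspt$ to $p$ crosses $\partial\unsafe$ at some $\alpha^o$ and rewriting $p = (\unspt + \alpha^o v) + (\alpha - \alpha^o)v$, and finally disposes of $p \in \unsafe$ as a separate trivial case. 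Your version shows that this case split and boundary-crossing reduction are unnecessary, since nothing in the witness estimate uses that the base point lies on the boundary; you also correctly observe that convexity of $\unsafe$ plays no role in the inequality itself (the paper's opening remark that $\unsafe^U$ is convex is likewise never used in its own proof), so the statement holds for any set admitting a nearest point. Two small refinements: existence of the nearest point $\unspt^*$ in $\overline{\unsafe}$ follows from closedness and nonemptiness alone---convexity is needed only for its \emph{uniqueness}, i.e., for the $\arg\min$ defining $v$ to be single-valued, so your attribution of ``well-definedness'' to convexity should be read in that sense; and the paper's detour does buy a marginally sharper intermediate bound, $d_\unsafe(p) \le (\alpha - \alpha^o)\|v\|$, measuring only the overhang past the boundary, though that extra precision is never exploited.
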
	
\ifthenelse{\boolean{REPORT}}
{	
	\begin{proof}
	$\unsafe^U$ is convex by the properties of Minkowski sums. 
	Let $\overline{\unspt} \in \partial \unsafe$. Then for any $\alpha \leq 1$, $d_\unsafe(\overline{\unspt}+\alpha v) \leq ||\overline{\unspt}+\alpha v - \overline{\unspt}|| = \alpha ||v|| = \alpha f(x_0) \leq f({x_0})$. 
	So translates of boundary points are closer to $\unsafe$ than $s_{x_0}(t^*)$.
	
	Now we show that all points in $\unsafe^U/\unsafe$ are translates of boundary points. Consider any point $p = \unspt + \alpha v$ in $\unsafe^U/\unsafe$: $\unspt$ is in $\unsafe$, but $p$ is not, so the line $[\unspt, p]$ crosses $\partial \unsafe$ for some value $\alpha^o$: $\unspt+\alpha^o v \in \partial \unsafe$. And, $p = \unspt+\alpha v = (\unspt+\alpha^o v) + (\alpha - \alpha^o)v$, so by what preceded, $d_\unsafe(p) \leq f(x)$.\\
	When $p \in \unsafe$, of course, $d_\unsafe(p) = 0 \leq f(x)$. 
	\end{proof}	
} % if REPORT
{
	\begin{proof}
	See full-length technical report\cite{hya_gf_descentonline}.
	\end{proof}	
} % end else REPORT
	We have thus defined 3 possible descent sets: $\unsafe \subset \unsafe^U \subset P(x_0;t^*)$. 	
	%------------------------------------------------------------------------------------
	\ifthenelse {\boolean{PEND}}
	{
	An alternative approximation to $P$ is the following set $P_c(x_0; t^*)$ (see Fig. \ref{fig:PPc}):

	\[P_c(x_0; t^*) \triangleq conv\{\unsafe,s_{x_0}(t^*)\} \]
	
	We show that $P_c \subset P$ by showing that all $p \in P_c$ satisfy $d_\unsafe(p) \leq d_\unsafe(s_{x_0})$ (drop the $t^*$ to avoid clutter). Let $p\in P_c$: then there exists a $\lambda \in [0,1] $ and a $\unspt \in \unsafe$ s.t.
	\begin{equation*}
	\begin{split}
	& p = \lambda s_{x_0} + (1-\lambda)\unspt\\
	& d_\unsafe(\lambda s_{x_0} + (1-\lambda)\unspt) \leq \lambda d_\unsafe(s_{x_0})+(1-\lambda)d_\unsafe(\unspt) = \\
    & = \lambda d_\unsafe(s_{x_0}) \leq d_\unsafe(s_{x_0})
	\end{split}
	\end{equation*}

	($d_\unsafe$ is convex if $\unsafe$ is convex). Thus we may choose
	\begin{equation}\label{eqParDescPc}
	d \in (X_0 - x_0) \bigcap e^{-At^*}[P_c(x_0; t^*) - s_{x_0}(t^*)]
	\end{equation}
	
	Any point in the set above is a descent direction. This is also a general result:
	
	\begin{proposition}
	Given any $y \in \Re^n / \unsafe$, points in $conv\{y,\unsafe\}$ are closer to $\unsafe$ than $y$.
	\end{proposition}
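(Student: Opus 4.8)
The plan is to mimic, in a coordinate-free way, the argument just given for $P_c$, since this proposition is exactly the general statement of which the $P_c$ computation was a special instance (take $y = s_{x_0}(t^*)$). The two ingredients I would rely on are (i) a clean parametrization of $conv\{y,\unsafe\}$ as a set of line segments from $y$ to $\unsafe$, and (ii) the convexity of the distance function $d_\unsafe$, which holds whenever $\unsafe$ is convex.

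First I would establish the parametrization. Because $\unsafe$ is convex, adjoining the single point $y$ yields a convex hull that is simply the union of all segments joining $y$ to points of $\unsafe$; concretely,
\[
conv\{y,\unsafe\} = \{\, \lambda y + (1-\lambda)\unspt \;:\; \lambda \in [0,1],\ \unspt \in \unsafe \,\}.
\]
The inclusion $\supseteq$ is immediate since each such combination is a convex combination of $y$ and a point of $\unsafe$. For $\subseteq$, any point of the hull is a finite convex combination $\lambda_0 y + \sum_i \mu_i \unspt_i$; collecting the $\unspt_i$ terms and using convexity of $\unsafe$ to replace $\sum_i \mu_i \unspt_i$ by $(1-\lambda_0)$ times a single point of $\unsafe$ gives the claimed form. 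This is the step I expect to carry the most weight, so I would state it explicitly rather than treat it as folklore.

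Next I would invoke convexity of $d_\unsafe$. Since $\unsafe$ is convex, $d_\unsafe(\cdot) = \inf_{\unspt \in \unsafe}\|\cdot - \unspt\|$ is a convex function (it is the infimal convolution of a norm with the indicator of a convex set, equivalently the distance to a convex set). Then for any $p = \lambda y + (1-\lambda)\unspt \in conv\{y,\unsafe\}$ I would write
\[
d_\unsafe(p) \;\le\; \lambda\, d_\unsafe(y) + (1-\lambda)\, d_\unsafe(\unspt) \;=\; \lambda\, d_\unsafe(y) \;\le\; d_\unsafe(y),
\]
where $d_\unsafe(\unspt)=0$ because $\unspt \in \unsafe$, and the last inequality uses $\lambda \le 1$ together with $d_\unsafe(y) \ge 0$. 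Since $p$ was an arbitrary point of $conv\{y,\unsafe\}$, this shows every such point is at least as close to $\unsafe$ as $y$, which is the assertion. The only subtlety worth flagging is that the decomposition of $p$ need not be unique, but the bound holds for \emph{any} valid choice of $\lambda$ and $\unspt$, so no uniqueness is required.
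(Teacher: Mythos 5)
Your proof is correct and takes essentially the same route as the paper's own argument, which it gives in the preceding $P_c$ computation (the special case $y = s_{x_0}(t^*)$): write any $p \in conv\{y,\unsafe\}$ as $\lambda y + (1-\lambda)\unspt$ with $\unspt \in \unsafe$, then apply convexity of $d_\unsafe$ together with $d_\unsafe(\unspt) = 0$ to get $d_\unsafe(p) \le \lambda\, d_\unsafe(y) \le d_\unsafe(y)$. Your explicit verification of the segment parametrization of the hull is a detail the paper leaves implicit, but the substance of the argument is identical.
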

	
	Note the following inclusion:
	\[P_c(x_0, t^*) \subseteq \unsafe^U(s_{x_0}(t^*) - \unspt^*) \subseteq P(x_0,t^*)\]	
	
	If $x_0+d \notin X_0$, then the following proposition shows that it can be shortened to fit in $X_0$, and still produce a descent:
	\begin{proposition}\label{lemmaApproachCircle}
	Let $s_x(t) + d' \in P_c(x;t)$, and $d = e^{-At^*}d'$. Then $\mu d$ is a descent vector for all $\mu \in (0,1]$
	\end{proposition}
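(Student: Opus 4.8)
The plan is to reduce the claim to two ingredients already in hand: the affine dependence of the flow on the initial condition, and the inclusion $P_c(x_0;t^*) \subset P(x_0;t^*)$ established just above. First I would record the scaled version of the shift identity. Since $s_{x_0}(t) = e^{At}x_0 + c(t)$ is affine in $x_0$, replacing $x_0$ by $x_0 + \mu d$ with $d = e^{-At^*}d'$ gives
\[ s_{x_0 + \mu d}(t^*) = e^{At^*}(x_0 + \mu d) + c(t^*) = s_{x_0}(t^*) + \mu e^{At^*}d = s_{x_0}(t^*) + \mu d' \]
for every $\mu \in [0,1]$. So scaling the perturbation of the initial condition by $\mu$ scales the displacement of the trajectory at the fixed time $t^*$ by exactly the same factor.

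The heart of the argument is a convexity observation that I would make next. By hypothesis the endpoint $s_{x_0}(t^*) + d'$ lies in $P_c(x_0;t^*) = conv\{\unsafe, s_{x_0}(t^*)\}$, while the other endpoint, $s_{x_0}(t^*)$ itself, is one of the two generators of that convex hull and hence also lies in $P_c(x_0;t^*)$. Because $P_c(x_0;t^*)$ is convex, the whole segment joining these two endpoints is contained in it; and $s_{x_0}(t^*) + \mu d'$ is precisely the point of this segment at parameter $\mu$. I would therefore conclude that $s_{x_0}(t^*) + \mu d' \in P_c(x_0;t^*)$ for all $\mu \in [0,1]$.

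From there I would invoke $P_c(x_0;t^*) \subset P(x_0;t^*)$, which by the definition of $P$ in \eqref{eqPDef} means that every point of $P_c$ is within distance $f(x_0)$ of $\unsafe$; in particular $d_\unsafe(s_{x_0}(t^*) + \mu d') \le f(x_0)$. Chaining this with the defining bound on robustness and the shift identity yields
\[ f(x_0 + \mu d) = \min_{0 \le t \le T} d_\unsafe(s_{x_0 + \mu d}(t)) \le d_\unsafe(s_{x_0 + \mu d}(t^*)) = d_\unsafe(s_{x_0}(t^*) + \mu d') \le f(x_0), \]
so $\mu d$ is a descent vector for every $\mu \in (0,1]$, as claimed.

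The computations here are routine and I do not anticipate a genuine obstacle; the only step requiring care is conceptual rather than technical. The final inequality bounds the new robustness by its value at the \emph{fixed} time $t^*$, whereas the true minimum defining $f(x_0 + \mu d)$ may be attained at some other time and be strictly smaller. Since the proposition only asks for $f(x_0 + \mu d) \le f(x_0)$, this upper bound at $t^*$ is exactly what is needed, and the direction of the inequality is never in question; I would simply flag this so the reader does not expect an equality.
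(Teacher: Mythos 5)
Your proof is correct and follows essentially the same route as the paper's: you observe that $s_{x_0}(t^*)+\mu d'$ is a convex combination of $s_{x_0}(t^*)$ and $s_{x_0}(t^*)+d'$, both in the convex set $P_c(x_0;t^*)$, and then use the inclusion $P_c \subset P$ together with the shift identity to bound the new robustness at the fixed time $t^*$. Your version is merely more explicit than the paper's (which leaves the shift identity and the final robustness bound implicit), and your non-strict inequality $f(x_0+\mu d)\le f(x_0)$ is if anything more careful than the paper's asserted strict inequality, which fails in the degenerate case $d'=0$.
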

	
	\begin{proof}
	$s_x(t) + \mu d'$ is on the line $(s_x(t), s_x(t) + d']$, and so is a convex combination of  $s_x(t), s_x(t) + d' \in P_c$, thus is itself in $P_c$. Therefore $d_\unsafe(s_x(t)+\mu d') < d_\unsafe(s_x(t))$. Therefore $\mu e^{-At^*}d'$ is a descent vector.
	\end{proof}
	
	}		{}  % end PEND	
	%------------------------------------------------------------------------------------

	\subsection{Implementation}\label{subsecImpl}
	The question we address here is: how do we obtain, \emph{computationally}, points in the descent set $\Wc$, where $\Wc = \unsafe, P(x_0)$ or $\unsafe^U(v)$? The following discussion is based on Chapters 8 and 11 of \cite{BoydV_book04}.
	
	Since we're assuming $X_0$ and $\unsafe$ to be convex, then the descent set is also convex.  
	Describe $X_0$ with a set of $N_X$ inequalities $q_i(x) \leq 0$ where the $q_i$ are convex and differentiable, and $\Wc = \{x | p_i(x;x_0) \leq 0, i=1...k\}$ for convex differentiable $p_i$ (the particular form of the $p_i$ will depend on the descent set at hand). We assume dom $p_i$ = dom $q_i \triangleq \Re^n$. 	
	
	\emph{Given an already simulated trajectory $s_{x_0}$} and its time of minimum robustness $t^*$, we are looking for a feasible $x_1$ such that $s_{x_1}(t) \in \Wc$ for some $t$. Thus we want to solve the following feasibility problem\\
	\begin{equation}\label{eqPCP}
	\begin{split}
	\min_{(x,\nu)}\phantom{11} &\nu \\
	s.t.\phantom{11} &p_i(s_x(t);x_0) \leq \nu, i=1\ldots k\phantom{1111}\textrm{(t-PDP($x_0$))}\\
	\phantom{11} &q_i(x) \leq \nu, i=1\ldots N_X\\	     
	\end{split}
	\end{equation}
	
	This is a convex program, which can be solved by a Phase I Interior Point method~\cite{BoydV_book04}. A non-positive minimum $\nu^*$ means we found a feasible $x$; if $\Wc = \unsafe$, then our work is done: we have found an unsafe point. Else, we can't just stop upon finding a non-positive minimum: we have merely found a new point $x_1$ whose robustness is less than $x_0$'s, but not (necessarily) 0. So we iterate: solve t-PDP$(x_0)$ to get $x_1$, solve t-PDP$(x_1)$ to get $x_2$, and so on, until $f(x_i) = 0$, a maximum number of iterations is reached, or the problem is unsolvable. If the minimum is positive, this means that \emph{for this value of $t$}, it is not possible for any trajectory to enter $\overline{\unsafe}$ at time $t$. 	
	
	The program suffers from an arbitrary choice of $t$. One approach is to sample the trajectory at a fixed number of times, and solve \eqref{eqPCP} for each. This is used in the experiments of this section. A second approach, used in the next section, is to let the optimization itself choose the time, by adding it to the optimization variable. 
	The resulting program is no longer necessarily convex.

		%------------------------------------------------------------------------------------------------
\ifthenelse{\boolean{REPORT}} {	
		\subsection{Numerical Experiments}\label{sectionLinNumer}
	%------------------------------------------------------------------------------------------------
	
	In this section, we present some numerical experiments demonstrating the practical significance of the previous theoretical results.

\ifthenelse{\boolean{PEND}} 
{
\begin{exmp}
We first present a simple 2D example (Example 7.15 in Tabuada \cite{Tabuada2009}).
The linear system has the following dynamics
\[ \dot x = 
\begin{bmatrix}
 -7 & -3 \\
 3  & 0 
\end{bmatrix}
x.
\]
We consider two unsafe sets: $\unsafe_1 = \{ x \in \Re^2 \; | \; x_2 \ge 4 \}$ and $\unsafe_2 = \{ x \in \Re^2 \; | \; x_2 \ge 2 \}$.
The set $\unsafe_1$ is not reachable, while $\unsafe_2$ is.
In both cases, we apply the descent algorithm starting from a trajectory at $[5.5 \; 0]^T$.
In the first case (Fig. \ref{fig:2d:01}), the algorithm does not find a complete descent direction and, thus, attempts to find a partial descent direction.
The algorithm terminates and returns the trajectory with minimum robustness that is found.
In the second case (Fig. \ref{fig:2d:02}), the unsafe set is reachable and the algorithm finds a trajectory that falsifies the specification.

\begin{figure}[ht]
\centering
\includegraphics[width=200pt]{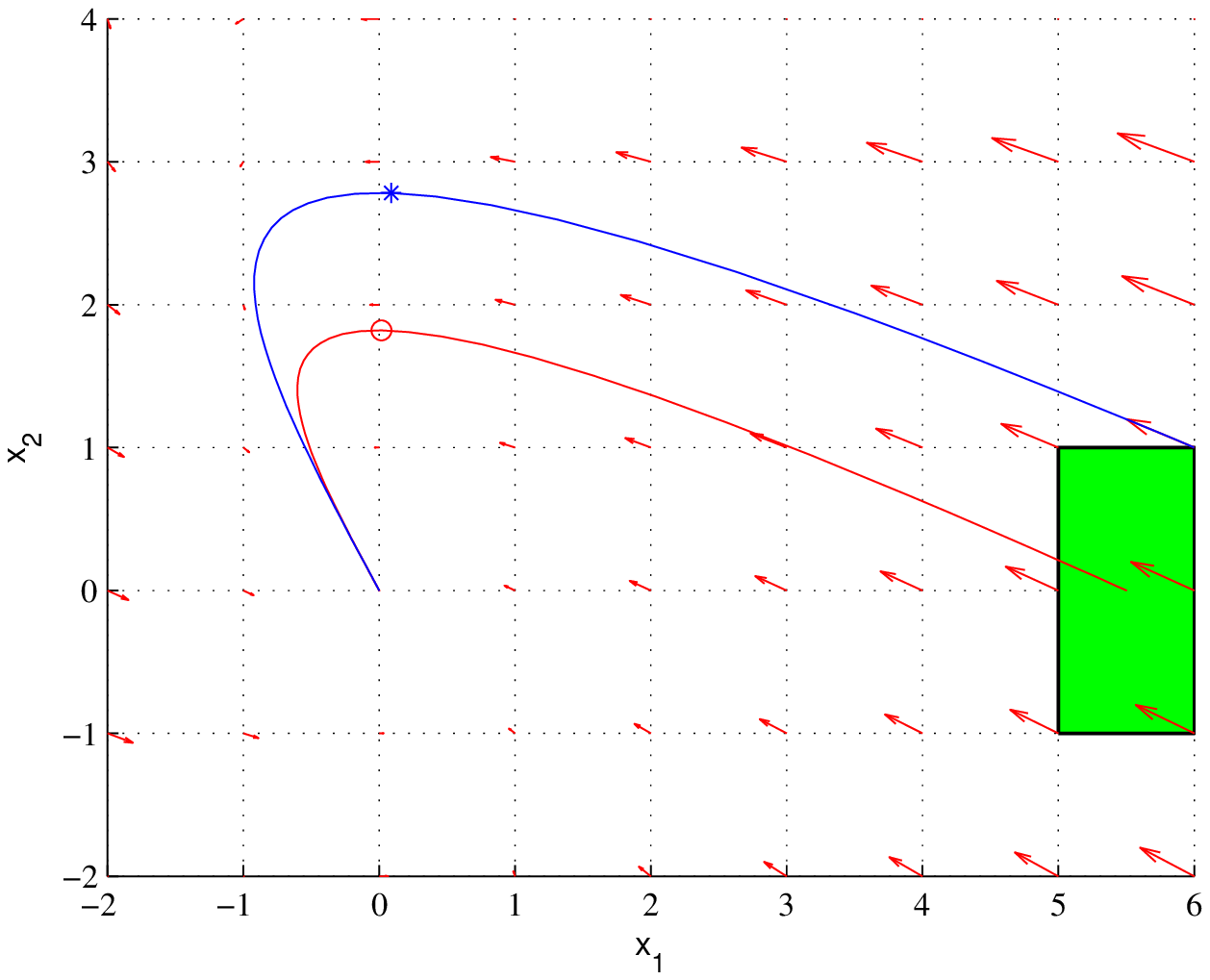}
\caption{[Unreachable unsafe set] The initial test trajectory starting from $[5.5 \; 0]^T$ and the trajectory found with minimum robustness. 
The circle indicates the point of minimum robustness for the initial trajectory, while the star the point of minimum robustness for the trajectory resulting after the descent.}
\label{fig:2d:01}
\end{figure}

\begin{figure}[ht]
\centering
\includegraphics[width=200pt]{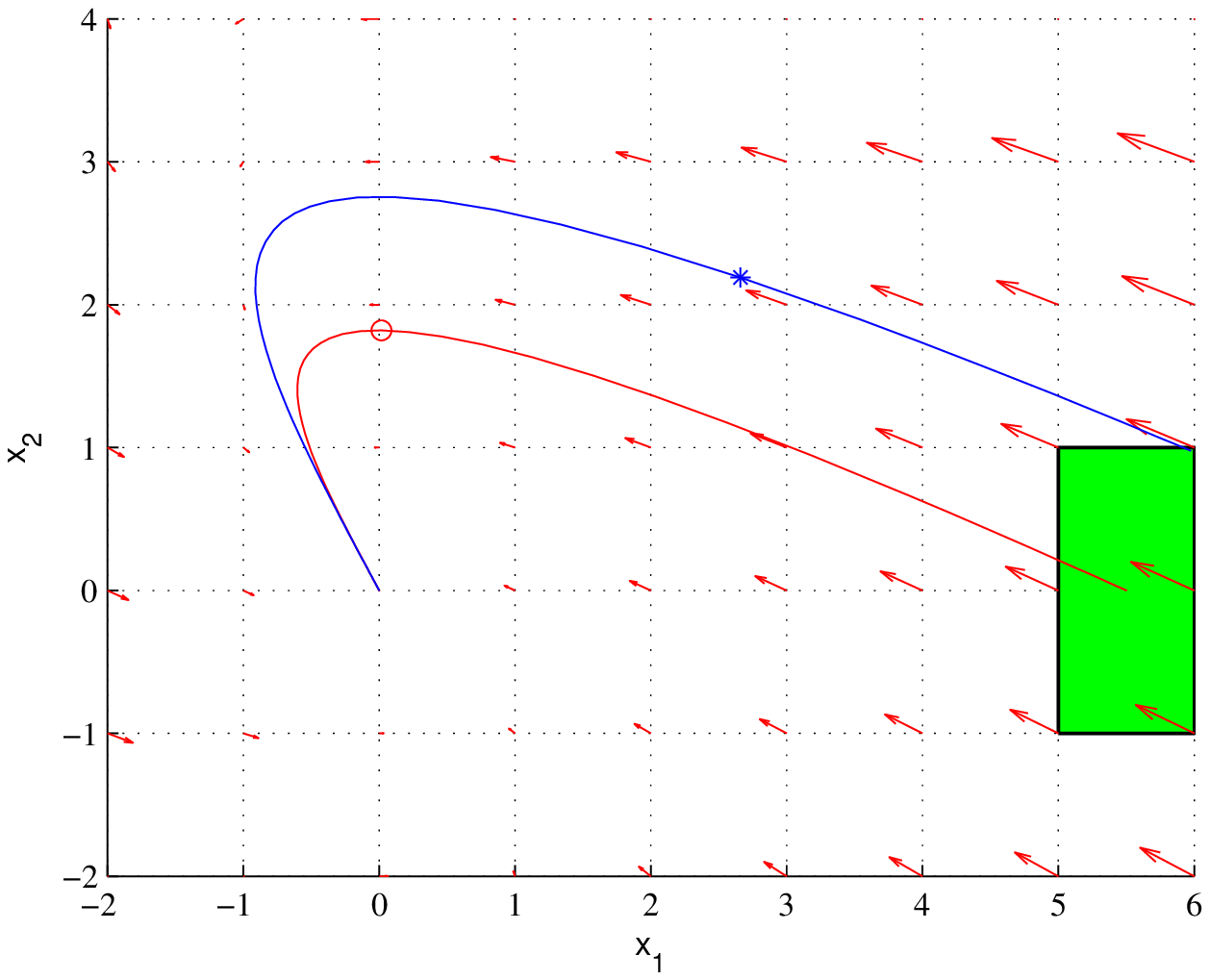}
\caption{[Reachable unsafe set] The initial test trajectory starting from $[5.5 \; 0]^T$ and the trajectory found with minimum robustness. 
The circle indicates the point of minimum robustness for the initial trajectory, while the star the point of almost zero robustness that the descent algorithm found.}
\label{fig:2d:02}
\end{figure}

\end{exmp}
}
{}

\begin{exmp} \label{exmp:rlc}
We consider the verification problem of a transmission line \cite{Han2005Phd}. 
The goal is to check that the transient behavior of a long transmission line has acceptable overshoot for a wide range of initial conditions. 
Figure \ref{fig:rlc} shows a model of the transmission line, which consists of a number of RLC components (R: resistor, L: inductor and C: capacitor) modeling segments of the line.
The left side is the sending end and the right side is the receiving end of the transmission line. 

\begin{figure}[ht]
\begin{center}
\includegraphics[width=200pt]{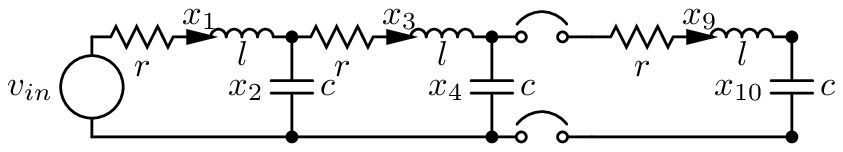}
\caption{RLC model of a transmission line.} 
\label{fig:rlc}
\end{center}
\end{figure}

The dynamics of the system are given by a linear dynamical system
$$
\dot x(t) = Ax(t)+b V_{in}(t) \mbox{ and } V_{out}(t) = C x(t)
$$
where $x(t)\in \Re^{81}$ is the state vector containing the voltage of the capacitors and the current of the inductors and $V_{in}(t)\in \Re $ is the voltage at the sending end. 
The output of the system is the voltage $V_{out}(t) \in \Re$ at the receiving end. 
Here, $A$, $b$ and $C$ are matrices of appropriate dimensions. 
Initially, we assume that the system might be in any operating condition such that $x(0) \in [-0.1,0.1]^{41} \times [-0.01,0.01]^{40}$.
Then, at time $t=0$ the input is set to the value $V_{in}(t)=1$.

The descent algorithm is applied to the test trajectory that starts from $x(0) = 0$ and it successfully returns a trajectory that falsifies the system (see Fig. \ref{fig:fals}).

\begin{figure}[ht]
\begin{center}
\includegraphics[width=200pt]{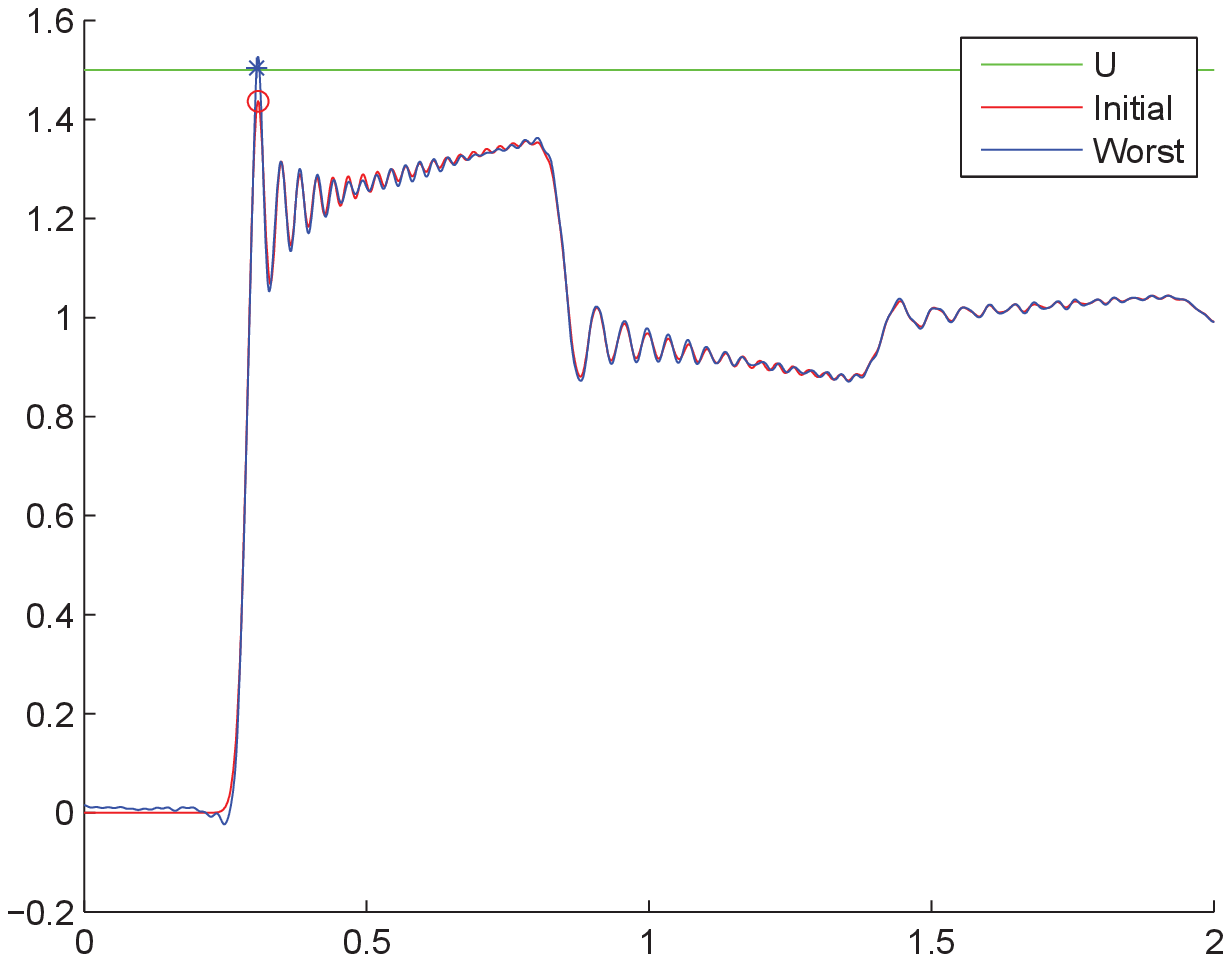}
\caption{The unsafe set $U$, the initial test trajectory starting from $x(0) = 0$ and the trajectory that falsifies the system.} 
\label{fig:fals}
\end{center}
\end{figure}

\end{exmp}
		
	}
	{}
	
	\section{Hybrid systems with affine dynamics}\label{sectionHybrid}

	We now turn to the case of hybrid systems with affine dynamics in each location. 
	The objective is still to find a descent direction in $H_0$, given a simulated trajectory $\eta_{h_0}$ originating at point $h_0 \in H_0$. 
	Note that since we have assumed that $\proj_L(H_0)$ is a singleton set, the problem reduces to finding a descent direction in $X_0 = \proj_X(H_0)$.
	
	{\bf Assumptions}. At this point, we make the following assumptions:
	
	a. The continuous dynamics in each location are stable.\footnote{This is not a restrictive assumption since we can also consider incrementally stable systems~\cite{Tabuada2009}, and even unstable linear systems\cite{JuliusP09hscc}. }
	
	b. For every transition $e \in L^2$, the resets $Re(\cdot,e)$ are differentiable functions of their first argument.
	
	c. Conditions 4 and 5 of Theorem III.2 in~\cite{LygerosJSZS03tac} are satisfied, namely: for all $i$, there exists a differentiable function $\sigma_i: \Re^n \mapsto \Re$ such that $Inv(l_i)=\{x \in \Re^n | \sigma_i(x)\geq 0\}$; and, for all $i,x$ such that $\sigma_i(x)=0$, the Lie derivative $L_F \sigma_i(x) \ne 0$. 
This allows us to have a \emph{differentiable} transition time $t_x$ of the trajectory starting at the initial point $x\in X_0$.
	
	d. The sequence of locations $loc(\eta_{h_0})$ enters the location of the unsafe set. This is required for our problem to be well-defined (specifically, for the objective function to have finite values). The task of finding such an $h_0$ is delegated to the higher-level stochastic search algorithm, within which our method is integrated.
	
	\subsection{Descent in the Robustness Ellipsoid}\label{subsecRED}			
 	Consider a trajectory $\eta_{h_0}$ with positive robustness, with $\loc(\eta_{h_0}) = l_0 l_1 \dots l_N$. This is provided by the simulation. 
 	Let the initial set $X_0$ be in location $l_0$ and let $l_\unsafe$ denote the location of $\unsafe$. In order to solve Problem \ref{pb:main}, we assume that $l_\unsafe$ appears in  $\loc(\eta_{h_0})$ (see Assumption d above) - otherwise, $f(h_0) = +\infty$ and the problem as posed here is ill-defined.
 	We search for an initial point $h_0' \in H_0$ (actually $x_0' \in X_0$), whose trajectory gets closer to the unsafe set than the current trajectory $\eta_{h_0}$. 			
		
	In order to satisfy the constraints of Problem \ref{pb:main}, we need to make sure that the new point $h_0'$ that we propose generates a trajectory that follows the same sequence of locations as $\eta_{h_0}$.
This constraint can be satisfied using the notion of robust neighborhoods introduced in \cite{JuliusFALP07hscc}.
In \cite{JuliusFALP07hscc}, it is shown that for stable systems and for a given safe initial point $h_0 = (l_0,x_0)$, there exists an `ellipsoid of robustness' centered on $x_0$, such that any trajectory starting in the ellipsoid, remains in a tube around $\eta_{h_0}$. The tube has the property that all trajectories in it follow the same sequence of locations as $\eta_{h_0}$. Therefore, we restrict the choice of initial point to $X_0 \bigcap E(x_0)$, where $E(y) = \{x | (x-y)^TR^{-1}(x-y) \leq 1\}$ is the ellipsoid of robustness centered on $x_0$, with shape matrix $R$. 
Formally, in \cite{JuliusFALP07hscc}, the following result was proven.
	
\begin{thm}
Consider a hybrid automaton $\Hc$, a compact time interval $[0,T]$, a set of initial conditions $H_0 \subseteq H$ and a point $h_0 = (l_0,x_0) \in H_0$. 
Then, we can compute a number $\varepsilon >0$ and a bisimulation function $\phi(x_1,x_2) = (x_1 - x_2)^T M (x_1-x_2)$, where $M$ is a positive semidefinite matrix, such that for any $x_0' \in \{ y \in X \; | \; \phi(x_0,y) \le \varepsilon \}$, we have $\loc(\eta_{h_0}) = \loc(\eta_{(l_0,x_0')})$. 
\label{thm:bis}
\end{thm}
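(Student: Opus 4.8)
The plan is to build the bisimulation function from the stable continuous dynamics and then propagate a neighborhood backward along the reference trajectory through the finite location sequence $l_0 l_1 \cdots l_N$. First I would treat a single location: for stable affine dynamics $\dot x = Ax + b$, the difference $\delta(t) = s_{x_1}(t) - s_{x_2}(t)$ of any two trajectories obeys the linear ODE $\dot\delta = A\delta$, since the affine term $b$ cancels. Because $A$ is Hurwitz (Assumption a), the Lyapunov inequality admits a positive-definite solution $M$ with $A^T M + M A$ negative semidefinite, and along the paired trajectories $\frac{d}{dt}\phi(s_{x_1}(t), s_{x_2}(t)) = \delta^T(A^T M + M A)\delta \le 0$. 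Hence $\phi$ is non-increasing, so an initial $\phi$-ball of radius $\varepsilon$ stays inside the $\phi$-ball of the same radius for the entire sojourn in that location. This already yields the quadratic form $\phi(x_1,x_2) = (x_1-x_2)^T M (x_1-x_2)$ and the positive semidefiniteness of $M$ claimed in the statement.

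Next I would quantify the safety margin to the guards. Along the reference trajectory restricted to a location $l_i$, and for every guard boundary $\{\sigma_j = 0\}$ other than the one actually being crossed, the trajectory keeps a strictly positive distance; by compactness of $[0,T]$ this distance is bounded below by some $m_i > 0$. Transversality of the genuine crossing (Assumption c, $L_F \sigma_i \ne 0$ on the guard) lets the implicit function theorem express the switching time $t_x$ as a differentiable, hence locally Lipschitz, function of the entry point. Consequently a sufficiently small neighborhood of the entry point produces trajectories that cross the same guard, in the same order, at a nearby time, and never reach any forbidden guard within the margin $m_i$.

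The core of the argument is the backward propagation across resets. At the transition from $l_i$ to $l_{i+1}$ the reset $Re(\cdot,e_i)$ is differentiable (Assumption b), hence locally Lipschitz with some constant $K_i$ near the crossing point; a $\phi$-ball of radius $\rho$ at the exit of $l_i$ is therefore mapped into a ball of radius at most a fixed multiple of $\rho$ at the entry of $l_{i+1}$, the constant absorbing $K_i$ and the change of shape matrix between the locations. Composing the non-increasing flow bound of each location with the Lipschitz bound of each reset, I would march backward from $l_N$ to $l_0$: starting with the tube half-width dictated by $\min_i m_i$ near the unsafe location, I pull it back through every reset and flow segment, each step shrinking the admissible radius by a known factor. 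The final admissible radius at $x_0$, expressed in the metric $\phi$ of location $l_0$, is the sought $\varepsilon$, and any $x_0'$ with $\phi(x_0,x_0') \le \varepsilon$ then generates a trajectory confined to the tube, thus visiting exactly $\loc(\eta_{h_0})$.

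The main obstacle is the reset step: a reset can both translate the crossing point and distort the local neighborhood, so one must verify that the image of the ellipsoid still lies in the transversal cone of the next guard and that the accumulated Lipschitz constants do not force $\varepsilon$ to zero. This is precisely where differentiability of the resets (Assumption b) and transversality of the crossings (Assumption c) are indispensable; without them the switching-surface structure could degenerate and arbitrarily close trajectories could peel off into a different location sequence, invalidating the tube construction.
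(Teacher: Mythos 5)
Your reconstruction is correct and matches the approach of the proof this paper relies on: the theorem is imported verbatim from Julius et al.~(reference [JuliusFALP07hscc]), whose construction is exactly your argument --- a quadratic Lyapunov/bisimulation function $\phi(x_1,x_2)=(x_1-x_2)^T M(x_1-x_2)$ that is non-increasing along paired flows in each (stable) location, positive guard margins along the reference trajectory, transversal crossings giving Lipschitz switching times, and propagation of level sets through the resets along the finite location sequence to obtain a strictly positive $\varepsilon$ at $x_0$. Your closing observation about the reset step and time mismatch at crossings is also where the cited construction does its real work, so nothing essential is missing.
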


\begin{rem}
(i) In \cite{JuliusFALP07hscc}, in the computation of $\varepsilon$, we also make sure that any point in the robust neighborhood generates a trajectory that does not enter the unsafe set.
In this work, we relax this restriction since our goal is to find a point that generates a trajectory that might enter the unsafe set.
(ii) In view of Theorem \ref{thm:bis}, the shape matrix for the ellipsoid is defined as $R = \varepsilon^2 M^{-1}$.
\end{rem}

We now proceed to pose our search problem as a feasibility problem.	Let $t_0$ be the time at which $s_{x_0}$ is closest to $\unsafe$. 
We choose $P(x_0;t_0)$ as our descent set: recall that it is the set of all points which are closer to $\unsafe$ than $s_{x_0}(t_0)$ (Def. \ref{eqPDef}). 
Therefore, if we can find $x^* \in X_0 \bigcap E(x_0)$ such that $s_{x^*}(t^*) \in P(x_0;t_0)$ for some $t^*$, it follows that $f(x^*) \leq f(x_0)$. 
To simplify notation, let $\Wc =P(x_0;t_0)$ be the descent set. As before, it is assumed that $\Wc = \{x \in \Re^n : p_i(x) \leq 0, i=1\ldots k\}$ for differentiable $p_i$.
The search problem becomes:

Given $\eta_{h_0}$, find $x^* \in X_0 \bigcap E(x_0)$ and $t^* \geq 0$, such that $s_{x^*}(t^*) \in \Wc$. This is cast as an optimization problem over $z \in \Re^{n}\times\Re_+\times\Re$:
	\begin{equation} \label{eqHybridProg}
	\begin{split}
	\min_{z = (x,t,\nu)} & \nu \\	
	s.t.\phantom{11} & C_0 x - g_0 \leq 0\\
									 & (x-x_0)^TP^{-1}(x-x_0) -1 \leq \nu\\
									 & p_i(s_x(t);x_0)\leq \nu, i = 1\ldots k\\									
	\end{split}
	\end{equation}
where $s_x(t) = \proj_X(\eta_{(l_0,x)}(t))$ and $X_0 = \{x | C_0 x - g_0 \leq 0\}$.

\begin{rem}
Note that Problem \eqref{eqHybridProg} is specific to a choice of initial point $x_0$; this will be important in what follows. In our implementation, the first constraint is specified as bounds to the optimization and so is always satisfied.
\end{rem}

Later in this section, we discuss how to solve this optimization problem. 
For now, we show how solving this problem produces a descent direction for the robustness function. 
For convenience, for $z=(x,t,\nu)$, we define the constraint functions
	 \begin{subequations}\label{eq:cnstrgrp}
	\begin{align} 
	G_0(z) & = C_0 x - g_0 \label{eq:G0}\\
	G_E(z) & = (x-x_0)^TP^{-1}(x-x_0) -1 \label{eq:GE}\\
	G_\Wc(z) & = \begin{pmatrix}									  
									  p_1(s_x(t);x_0)\\										  
									  \vdots \\
									  p_k(s_x(t);x_0) \label{eq:GW}\\
						 \end{pmatrix}
	\end{align}				
	\end{subequations}    
  A point $z$ is \emph{feasible} if it satisfies the constraints in Problem \eqref{eqHybridProg}. 
  Finally, define the objective function $F(z) = \nu$.
    
  The objective function $F(z)$ measures the \emph{slack} in satisfying the constraints: a negative $\nu$ means all constraints are strictly satisfied, and in particular, $G_\Wc$. 
Thus, we have a trajectory that enters $\Wc$ and, hence, gets strictly closer to $\unsafe$. This reasoning is formalized in the following proposition:

  \begin{proposition}\label{prop:hybDesc}
	Let $z^* = (x^*, t^*, \nu^*)$ be a minimum of $F(z)$ in program (\ref{eqHybridProg}). Then $f(l_0,x^*) \leq f(l_0,x_0)$.
	\end{proposition}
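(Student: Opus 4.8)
The plan is to exploit the optimality of $z^*$ by comparing its objective value against an explicitly exhibited feasible point, and then to read the desired inequality directly off the constraints that must hold at the minimizer. First I would check that $z_0 = (x_0,t_0,0)$ is feasible for program \eqref{eqHybridProg}. The bound constraint $C_0 x_0 - g_0 \le 0$ holds because $x_0 \in X_0$; the ellipsoid constraint holds with slack because $(x_0-x_0)^T P^{-1}(x_0-x_0)-1 = -1 \le 0$; and each descent constraint $p_i(s_{x_0}(t_0);x_0)\le 0$ holds because $t_0$ is the time of closest approach, so $d_\unsafe(s_{x_0}(t_0)) = f(x_0)$ and hence $s_{x_0}(t_0)\in\Wc = P(x_0;t_0)$ by the definition \eqref{eqPDef}. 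Thus $F(z_0)=0$, and since $z^*$ minimizes $F$ (or, for a solver seeded at $z_0$ that descends monotonically, returns a value no larger than $F(z_0)$), we obtain $\nu^* = F(z^*) \le 0$.

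Second, with $\nu^* \le 0$ fixed, I would extract the two facts that every constraint at $z^*$ now delivers. The ellipsoid constraint gives $(x^*-x_0)^T P^{-1}(x^*-x_0)-1 \le \nu^* \le 0$, i.e.\ $x^*$ lies in the robustness ellipsoid $E(x_0)$; by Theorem \ref{thm:bis} (with $R=\varepsilon^2 M^{-1}$) this forces $\loc(\eta_{(l_0,x^*)}) = \loc(\eta_{h_0})$, so the trajectory from $x^*$ follows the same location sequence and in particular still enters $l_\unsafe$ (Assumption d). Consequently $f(l_0,x^*)$ is finite and is attained as a value of $d_\unsafe$ at a time when the trajectory occupies $l_\unsafe$. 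The descent constraints in turn give $p_i(s_{x^*}(t^*);x_0)\le \nu^* \le 0$ for every $i$, that is $s_{x^*}(t^*)\in\Wc = P(x_0;t_0)$, which by \eqref{eqPDef} means $d_\unsafe(s_{x^*}(t^*)) \le f(x_0)$.

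Finally I would chain these together: $f(l_0,x^*) = \min_{0\le t\le T} D_\unsafe(\eta_{(l_0,x^*)}(t)) \le D_\unsafe(\eta_{(l_0,x^*)}(t^*)) = d_\unsafe(s_{x^*}(t^*)) \le f(x_0) = f(l_0,x_0)$, which is the claim. The step I expect to be the main obstacle is the middle equality $D_\unsafe(\eta_{(l_0,x^*)}(t^*)) = d_\unsafe(s_{x^*}(t^*))$: the descent constraint only controls the continuous state through $d_\unsafe$, whereas the robustness $f$ is defined with the location-aware $D_\unsafe$, which equals $+\infty$ away from $l_\unsafe$. I therefore need to argue that the time $t^*$ at which the continuous state reaches the neighborhood $\Wc$ of $\unsafe$ is a time at which $\eta_{(l_0,x^*)}$ actually sits in $l_\unsafe$. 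This is where I would lean on $\Wc = P(x_0;t_0)$ being a tight neighborhood of $\unsafe \subset l_\unsafe$ together with the preserved location ordering from Theorem \ref{thm:bis}; if this alignment fails, one can only conclude the weaker bound on the continuous distance, so this is the crux of the argument.
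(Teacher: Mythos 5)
Your proof is correct and follows essentially the same route as the paper's: exhibit the seed $z_0=(x_0,t_0,0)$ as feasible with $F(z_0)=0$ (the paper phrases this via the normalization $\nu=\max\{G_E,G_\Wc\}$ together with $G_E(z_0)=-1$ and $s_{x_0}(t_0)\in\partial\Wc$), conclude $\nu^*\le \nu_0 = 0$ at the returned minimum, and read $s_{x^*}(t^*)\in\Wc=P(x_0;t_0)$ off the constraint $G_\Wc(z^*)\le\nu^*$, whence $d_\unsafe(s_{x^*}(t^*))\le f(x_0)$. The one point where you go beyond the printed proof is the location-alignment issue you flag as the crux: the paper's proof silently identifies ``$s_{x^*}$ enters $\Wc$'' with ``robustness no larger,'' i.e.\ it implicitly assumes $\eta_{(l_0,x^*)}$ occupies $l_\unsafe$ at time $t^*$, so that the location-aware $D_\unsafe$ coincides with $d_\unsafe$ there. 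Your worry is legitimate, and it is resolved not inside this proposition but by the surrounding machinery: the ellipsoid constraint plus Theorem~\ref{thm:bis} force $\loc(\eta_{(l_0,x^*)})=\loc(\eta_{h_0})$, and the admissible times are effectively confined (cf.\ the transition times $t_i^-$, $t_i^+$ and the differentiability range $t\in(t_{N-1},T]$ used later) to the window during which every trajectory launched from the ellipsoid sits in the unsafe location. So your proposal is, if anything, more scrupulous than the paper's own proof on exactly the step the paper glosses.
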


\ifthenelse{\boolean{REPORT}}
{	
	\begin{proof}
It is assumed that the optimizer is iterative and that it returns a solution that decreases the objective function. In what follows, for a vector $y \in \Re^n$, $\max y$ is the largest entry in $y$.

We first remark that for a given $x$ and $t$ that satisfy the constraints in \eqref{eqHybridProg}, 
\[z = (x,t,\max\{G_E(x,t), G_\Wc(x,t)\})\]
is feasible, and $F(z) \leq F(x,t,\nu)$ for any feasible $(x,t,\nu)$. Therefore, we may only consider points with $F(z) = \nu = \max\{G_E(x,t), G_\Wc(x,t)\}$.

Let $z_0 = (x_0, t_0, \nu_0)$ be the initial point of the optimization. 
Because $x_0$ is the center of $E(x_0)$, $G_E(z_0) = -1$. 
And, because $s_{x_0}(t_0) \in \partial\Wc$, $\max G_\Wc(z_0) = 0$. Thus $\nu_0 = 0$. Therefore, at the minimum $z^* = (x^*, t^*, \nu^*)$ returned by the optimizer, $\nu^* \leq \nu_0 = 0$. In particular, $G_\Wc(z^*) \leq 0$, and the new trajectory $s_{x^*}$ enters $\Wc$. Therefore, its robustness is no larger than that of the initial trajectory $s_{x_0}$. 
\end{proof}

% differentiableProblem13

We now address how Problem \ref{eqHybridProg} might be solved. Functions $F$, $G_0$ and $G_E$ are differentiable in $z = (x,t,\nu)$. It is not clear that $G_\Wc$, or equivalently, $p_i(s_x(t);x_0)$, \emph{as a function of $z$}, is differentiable. We now show that under some asumptions on the $p_i$, for trajectories of linear systems, $p_i$ is in fact differentiable in both $x$ and $t$, over an appropriate range of $t$. This implies differentiability in $z$. Therefore, standard gradient-based optimizers can be used to solve Problem \ref{eqHybridProg}. 

For the remainder of this section, we will re-write $s_x(t)$ as $s(x,t)$ to emphasize the dependence on the initial point $x$. 
$s^{(i)}(x,\tau)$ will denote the point, at time $\tau$, on the trajectory starting at $x \in Inv(l_i)$, and evolving according to the dynamics of location $i$. 
When appearing inside location-specific trajectories such as $s^{(i)}$, the time variable will be denoted by the greek letter $\tau$ to indicate \emph{relative} time: that is, time measured from the moment the trajectory entered $l_i$, not from datum $0$. $s(x,t)$ (without superscript) will denote the hybrid trajectory, traversing one or more locations. 
We will also drop the $x_0$ from $p_i(y;x_0)$, and write it simply as $p_i(y)$. 

We first prove differentiability in $x$. Therefore, unless explicitly stated otherwise, the term `differentiable' will mean `differentiable in $x$'. Start by noting that $p_i(s(x,t))$ is a composite function of $x \in X_0$. Since $p_i$ is differentiable, it is sufficient to prove that $s(x,t)$ is differentiable. 
The hybrid trajectory $s(x,\cdot)$ is itself the result of composing the dynamics from the visited locations $l_0,...,l_{N-1}$.

Recall that $E(x_0)$ is the ellipsoid of robustness centered at $x_0$. As shown by Julius et al.\cite{JuliusFALP07hscc}, the following times are well-defined:
\begin{defn} [Transition times]
Given $x_0 \in X_0$, let $E_0 \triangleq \mathrm{int}(E(x_0) \bigcap X_0)$.
$t_i$ is the time at which trajectory $s(x_0)$ transitions from $Inv(l_{i-1})$ into $Inv(l_i)$ through guard $Guard(l_{i-1},l_i)$. \\
$t_i^-$ is the maximal time for which the image of $E_0$ under the hybrid dynamics is contained in $Inv(l_{i-1})$:
\[t_i^- = \max\{t | s(E_0, t) \subset Inv(l_{i-1}) \}\] 
In other words, $t_i^-$ is the time at which occurs the first $l_{i-1}$-to-$l_i$ transition of a point in $s(E_0)$.

$t_i^+$ is the minimal time for which the image of $E_0$ under the hybrid dynamics is contained in $Inv(l_i)$:
\[t_i^+ = \min\{t | s(E_0, t) \subset Inv(l_i) \}\] 
In other words, $t_i^+$ is the time at which occurs the last $l_{i-1}$-to-$l_i$ transition of a point in $s(E_0)$.

For a given point $ x \in X_0$, $t_x^{i-1\rightarrow i}$ ($\tau_x^{i-1\rightarrow i}$) is the absolute (relative) transition time of trajectory $s^{(i)}(x)$ from $Inv(l_{i-1})$ into $Inv(l_i)$ through guard $Guard(l_{i-1},l_i)$. Thus, for example, $t_1 = t_{x_0}^{0\rightarrow 1} = \tau_{x_0}^{0\rightarrow 1}$ and $t_2 = t_{x_0}^{1\rightarrow 2} = \tau_{x_0}^{0\rightarrow 1}+\tau_{y_0}^{1\rightarrow 2}$, with $y_0=s^{0}(x_0,t_{x_0}^{0\rightarrow 1})$. When the transition is clear from context, we will simply write $t_x$ ($\tau_x$).

\end{defn}

We will first show differentiability of a trajectory that visits only 2 locations $l_0$ and $l_1$:
\begin{equation}\label{eq:s2locs}
s(x_0,t) = s^{(1)}(Re(s^{(0)}(x_0,t_x), (l_0,l_1)),t-t_x)
\end{equation}

\begin{exmp}
We first present a simple 1D example to illustrate the definitions and the idea of the proof. Consider the hybrid system with three locations 
\[\Hc = (\Re, \{0,1,2\}, \{(0,1), (1,2)\}, Inv, Flow, Guard, Id) \]
where $Inv(l) = \Re$ for $l=0,1,2$, and the flow is defined by 
\[ Flow(l,x) = \dot x(t) \left \{
\begin{array}{cc} 
x(t) & \mbox{ if } l \in \{0,2\} \\
-x(t)& \mbox{ if } l = 1
\end{array} \right .
\]
The guards are $Guard(0,1) = \{1\}$ and $Guard(1,2) = \{1/4\}$. $Id$ is the identity map, so there are no resets. The initial set is $X_0 = [0, 1/2]$. The solutions in the individual locations are then
\[s^{(0)}(x,t) = e^tx \]
\[s^{(1)}(x,t) = e^{-t}x\]
\[s^{(2)}(x,t) = e^tx\]

We can solve, in this simple case, for $\tau_x^{0\rightarrow 1}$: $e^{\tau_x}x=1 \Rightarrow \tau_x^{0\rightarrow 1} = \ln(1/x)$.
Similarly for $\tau_x^{1\rightarrow 2}$: $e^{-\tau_x}\cdot1 = 1/4\Rightarrow \tau_x^{1\rightarrow 2} = ln(4x)$.

We first show differentiability of the trajectory over locations 0 and 1. We then do the same for a trajectory over locations 1 and 2. Then we stitch the two together and show differentiability over 3 locations. For locations 0 and 1:
$s(x,t) = s^{(1)}(s^{(0)}(x,t_x),t-t_x) = s^{(1)}(1,t-t_x) =e^{-(t-t_x)}\cdot 1=e^{-t}/x \Rightarrow \frac{d}{dt} s(x,t) = -\frac{e^{-t}}{x^2}$. 

Moving on the trajectory over locations 1 and 2, the procedure is the same: from an initial point $x \in Guard(0,1) = \{1\}$,
for a fixed (relative time) $\tau \in (t_2-t_1, t_3^- - t_1)$:
$s(x,\tau) = s^{(2)}(s^{(1)}(x,\tau_x),\tau-\tau_x) = s^{(2)}(1/4,\tau+\ln(1/4x)) =e^{\tau+\ln(1/4x)} 1/4=e^\tau/16x \Rightarrow \frac{d}{d\tau} s(x,\tau) = -\frac{e^{\tau}}{16x^2}$. 

Finally we stitch up the 2 portions of the trajectory:
$x\in X_0$, $t \in [t_2, t_3^-]$. $s(x,t) = s^{(2)}(s^{(1)}(s^{(0)}(x,t_1),t_2-t_1),t-t_2) = s^{(2)}(s^{(1)}(1,t_2-t_1),t-t_2) = s^{(2)}(1/4,t-t_2) = e^{t-t_2}/4$. Since $t_2 = t_x^{0\rightarrow 1}+\tau_1^{1\rightarrow 2} = \ln(1/x) + \ln(4\cdot 1) = \ln(4/x) \Rightarrow s(x,t) = \frac{e^t}{4} e^{\ln(x/4)} = xe^t/16 \Rightarrow \frac{d}{dt} s(x,t) = e^t/16$.

\end{exmp}

We now prove the general case.
  \begin{proposition}
	Let $x_0 \in E_0$, and fix $t \in (t_1 ,t_2^-]$. Consider the hybrid trajectory over 2 locations in Eq.\eqref{eq:s2locs}. If Assumptions a-d are satisfied, then $s(x,t)$ is differentiable at $x_0$.
	\label{prop:diffProb13x}
	\end{proposition}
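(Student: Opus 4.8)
The plan is to recognize $s(x,t)$, for the fixed $t \in (t_1, t_2^-]$, as a composition of differentiable maps and then invoke the chain rule. First I would argue that the representation \eqref{eq:s2locs} is valid not just at $x_0$ but on an entire neighborhood of $x_0$ in $E_0$. Because $t > t_1 = t_{x_0}^{0\rightarrow 1}$ strictly and $t \le t_2^-$, and because the transition time $x \mapsto t_x$ is continuous (indeed differentiable, by Assumption c), for every $x$ sufficiently close to $x_0$ we still have $t_x < t \le t_2^-$. Hence the trajectory starting at such an $x$ has made exactly its single $l_0 \rightarrow l_1$ transition by time $t$ and has not yet reached $l_2$ (recall $t_2^-$ is the earliest $l_1\rightarrow l_2$ transition over all of $E_0$). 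This guarantees that $\Phi(x) \triangleq s(x,t)$ is genuinely given by the right-hand side of \eqref{eq:s2locs} on a neighborhood of $x_0$, so it is meaningful to differentiate it there.

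Next I would decompose $\Phi$ into its constituent maps and check each is differentiable at the relevant point. Writing $\tau(x) = t_x$, I would set $a(x) = s^{(0)}(x, \tau(x))$, $b(x) = Re(a(x), (l_0,l_1))$, and $\Phi(x) = s^{(1)}(b(x), t - \tau(x))$. The location solutions $s^{(0)}(x,\tau) = e^{A_0\tau}x + c_0(\tau)$ and $s^{(1)}$ are jointly smooth in their state and time arguments, since the dynamics in each location are affine. The transition time $\tau$ is differentiable at $x_0$ by Assumption c (the content of conditions 4 and 5 of the cited Lygeros et al.\ result, which via the implicit function theorem and the transversality requirement $L_F\sigma_i \ne 0$ on the guard yields a differentiable hitting time). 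The reset $Re(\cdot,(l_0,l_1))$ is differentiable in its first argument by Assumption b.

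With these pieces in hand, differentiability of $\Phi$ at $x_0$ follows by repeated application of the chain rule: $a$ is differentiable as the composition of the jointly smooth $s^{(0)}$ with the differentiable $\tau$; $b$ is differentiable as $Re$ composed with $a$; and $\Phi$ is differentiable as $s^{(1)}$ composed with the differentiable map $x \mapsto (b(x),\, t - \tau(x))$. I would record the Jacobian only schematically to make the dependence transparent: the single term requiring care is that $\tau(x)$ enters in two places (as the integration time of $s^{(0)}$ and inside the residual duration $t - \tau(x)$ spent in $l_1$), both of which are absorbed by the product and chain rules once $d\tau/dx$ is available.

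The main obstacle is conceptual rather than computational, and it is precisely what Assumption c removes: the differentiability of the transition time $t_x$. Were the trajectory to graze the guard tangentially, i.e.\ $L_F\sigma = 0$ at the crossing, then $t_x$ could fail to be even continuous and the composition would break down; the transversal-crossing hypothesis is exactly what licenses the implicit-function-theorem argument for $t_x$. Everything else, namely the affineness of the flows and the smoothness of $Re$, is routine. A secondary point I would state carefully is that the interval $(t_1, t_2^-]$ is what keeps an entire neighborhood of $x_0$ in the single location $l_1$ at time $t$, so that no second transition intrudes and the two-location expression \eqref{eq:s2locs} remains the correct object to differentiate.
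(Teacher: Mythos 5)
Your proof is correct and follows essentially the same route as the paper's: both reduce the claim to the chain rule applied to the composition $x \mapsto s^{(1)}(Re(s^{(0)}(x,\tau_x),e),\,t-\tau_x)$, with the differentiability of the transition time $\tau_x$ supplied by Assumption c via Lemma III.3 of the cited Lygeros et al.\ result, the reset handled by Assumption b, and the affine flows handled by their explicit smoothness (the paper expands the variation-of-constants formula and invokes the Fundamental Theorem of Calculus where you invoke joint smoothness of the flow map, a cosmetic difference). Your preliminary observation that the two-location representation holds on an entire neighborhood of $x_0$ because $t_x < t \le t_2^-$ persists under small perturbations is a point the paper leaves implicit, and it is a welcome addition rather than a divergence.
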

	
	\begin{proof}
	In what follows, $e = (l_0,l_1)$. 
	
\begin{equation*}\label{eq:Res0_diff}
\begin{split}
s^{(0)}(x,\tau_x)&=e^{\tau_x A_0}x+\int_{0}^{\tau_x}{e^{(\tau_x-s)A_0}bds} \\
&=\underbrace{e^{\tau_x A_0}x}_{term1}+ \underbrace{e^{\tau_x A_0}}_{term2}\underbrace{\int_{0}^{\tau_x}{e^{-sA_0}bds}}_{term3} \\
\end{split}
\end{equation*}	
Terms 1 and 2 are clearly differentiable in $x$. 
For term3, write $M(t) = \int_{0}^{t}{e^{-sA_0}bds}$ so term3 = $M(\tau_x)$. $M(t)$ is differentiable by the $2^{nd}$ Fundamental Theorem of Calculus and its derivative is $M'(t) = e^{-t A_0}b$. As a consequence of Assumption c, $\tau_x$ itself is differentiable in $x$ (Lemma III.3 in ~\cite{LygerosJSZS03tac}), and the chain rule allows us to conclude that term3 is differentiable in $x$. 
Thus $s^{(0)}(x,\tau_x)$ is differentiable over $E_0$. Since $Re(\cdot,e)$ is differentiable by Assumption b, then $Re(s^{(0)}(x,\tau_x),e)$ is differentiable over $E_0$. 
Note that $E_0$ is open and $s^{(0)}$ is continuous, so $U = \{w \in \Re^n | w = s^{(0)}(x,t_x) \mbox { for some } x\in E_0\} \subset Guard(e)$ is open. Since $Re(\cdot,e)$ is continuous, then $Re(U,e)$ is open. 
Next,

\begin{equation*}\label{eq:traj_2_locs}
\begin{split}
	s(x,t) &= s^{(1)}(Re(s^{(0)}(x,t_x), e),t-t_x)\\
	&=\underbrace{e^{(t-t_x)A_1}}_{term4}Re(s^{(0)}(x,t_x),e)+\underbrace{e^{(t-t_x)A_1}}_{term5} \underbrace{\int_0^{t-t_x}{e^{-s A_1}b_1ds}}_{term6}	
\end{split}
\end{equation*}	
Using the same argument as above, terms 4, 5 and 6 are differentiable in $x$. In conclusion, $s(x,t)$ is differentiable at over $E_0$, and this ends the proof.

\end{proof}

%===========diff in x on many locs===============================
The following proposition generalizes Prop. \ref{prop:diffProb13x} to trajectories over more than 2 locations.

\begin{proposition} \label{prop:diffProb13x_manylocs}
Fix $t \in (t_{N-1} ,T]$, and consider the hybrid trajectory over $N \geq 1$ locations. Then $s(x,t)$ is differentiable at $x_0$ for all $x_0 \in E_0$.
\end{proposition}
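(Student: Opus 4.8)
The plan is to prove the statement by induction on the number of locations $N$, using Proposition \ref{prop:diffProb13x} as the engine of the inductive step. The base case $N=1$ is immediate: $s(x,t) = s^{(0)}(x,t) = e^{tA_0}x + e^{tA_0}\int_0^t e^{-\sigma A_0}b_0\,d\sigma$ is affine in $x$ and hence differentiable everywhere. (One could equally take $N=2$ as the base case, which is exactly Proposition \ref{prop:diffProb13x}.) As the inductive hypothesis I would assume that any hybrid trajectory over $N-1$ locations, started from an open set of initial conditions enjoying the robust-neighborhood property of \cite{JuliusFALP07hscc}, is differentiable in its initial point at every interior point, for relative times past the entry into the last of those $N-1$ locations.

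For the inductive step I would peel off the first location. Writing $e_0=(l_0,l_1)$, the entry point into $l_1$ is $w(x) = Re(s^{(0)}(x,t_x),e_0)$ and the absolute entry time is $t_x = \tau_x^{0\to1}$. By exactly the term-by-term argument in the proof of Proposition \ref{prop:diffProb13x} — differentiability of $e^{\tau_x A_0}x$, of the exponential factor $e^{\tau_x A_0}$, and of $\int_0^{\tau_x} e^{-\sigma A_0}b_0\,d\sigma$ via the Fundamental Theorem of Calculus, all combined through the chain rule using that $t_x$ is differentiable by Assumption c and Lemma III.3 of \cite{LygerosJSZS03tac} — the map $x\mapsto w(x)$ is differentiable on $E_0$, and so is $x\mapsto t_x$. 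The remainder of the trajectory, over the $N-1$ locations $l_1,\dots,l_{N-1}$, is itself a hybrid trajectory; denoting its initial-point map by $\hat s$, we have $s(x,t) = \hat s(w(x),\,t-t_x)$. The inductive hypothesis applies to $\hat s$ at the interior point $w(x_0)$, so $\hat s$ is differentiable there; composing with the differentiable maps $x\mapsto w(x)$ and $x\mapsto t-t_x$ yields differentiability of $s(\cdot,t)$ at $x_0$.

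Two points need care to make this composition legitimate on a neighborhood of $x_0$. First, the inductive hypothesis requires the new initial set $w(E_0)$ to be open and to carry the robust-neighborhood property: openness follows as in Proposition \ref{prop:diffProb13x} from $E_0$ being open, $x\mapsto s^{(0)}(x,t_x)$ being continuous into $Guard(e_0)$, and $Re(\cdot,e_0)$ being continuous; and since every trajectory from $E_0$ follows the full sequence $l_0 l_1\dots l_{N-1}$, their continuations from $w(E_0)$ all follow $l_1\dots l_{N-1}$, so $w(E_0)$ plays the role of ``$E_0$'' for the subproblem. Second, the range $t\in(t_{N-1},T]$ must guarantee that a whole neighborhood of $x_0$ sits in location $l_{N-1}$ at time $t$: the transition time into $l_{N-1}$ is continuous in $x$ and strictly below $t$ at $x_0$, and $t$ does not exceed $T$ (before which no point leaves $l_{N-1}$), so $s(\cdot,t)$ is given by one and the same composite formula on a neighborhood of $x_0$, and differentiating that single formula suffices.

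The hard part will be the bookkeeping of the chain rule through the nested composition: the $x$-dependent transition time $t_x$ enters simultaneously as the time-shift $t-t_x$ fed into the downstream map $\hat s$, as the upper limit of the integral term, and inside the matrix exponentials, so one must check that each occurrence is differentiable and that the product and composition rules combine them correctly. This is routine once the single-location building block (the term-by-term analysis with the Fundamental Theorem of Calculus, together with differentiability of $t_x$ from Assumption c) is in hand. The genuinely structural content, and the step I would be most careful about, is the openness propagation of the initial set across each reset, since this is precisely what keeps $w(x_0)$ an interior point of the domain at every stage and thereby lets the induction close.
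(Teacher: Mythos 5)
Your proof is correct and rests on the same two pillars as the paper's — induction on $N$, anchored on the two-location case of Proposition~\ref{prop:diffProb13x} — but your decomposition runs in the opposite direction. The paper peels off the \emph{last} location: it writes $s(x,t)=s^{(N-1)}(Re(\zeta(x,\tau_{N-2}),(l_{N-2},l_{N-1})),t-t_{N-1})$, where $\zeta$ is the trajectory over the first $N-1$ locations, applies the inductive hypothesis to $\zeta$, and then reuses the $N=2$ argument with $\zeta$ in place of $s^{(0)}$; you peel off the \emph{first} location, writing $s(x,t)=\hat s(w(x),t-t_x)$ and applying the inductive hypothesis to the tail trajectory $\hat s$ from $w(x_0)$. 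The trade-off: your direction forces you to propagate the structure of the initial set across the reset — openness of $w(E_0)$ and the robust-neighborhood property — which you handle by echoing the openness step in the proof of Proposition~\ref{prop:diffProb13x} (a step that is in fact shaky in the paper itself, since continuity of $Re(\cdot,e)$ does not make it an open map; but you are no worse off than the paper here). The paper's direction needs no such propagation, since $E_0$ remains the initial set throughout, but its one-line inductive step silently assumes differentiability of the hitting-point map $x\mapsto\zeta(x,\tau_x)$, which is the exact analogue of the chain-rule bookkeeping you flag. The one thing to tighten in your write-up: your stated inductive hypothesis (differentiability in the initial point at \emph{fixed} time) is formally too weak for the composition $\hat s(w(x),t-t_x)$, whose time argument is $x$-dependent, so you need joint differentiability of $\hat s$ in initial point and time; your own observation supplies the repair — for $t>t_{N-1}$ and $x$ near $x_0$ the time argument lands past $\hat s$'s entry into $l_{N-1}$, where $\hat s$ is an explicit matrix-exponential flow from a differentiable entry point, so the time-partial exists and is continuous — but this should be built into the inductive hypothesis rather than dismissed as routine. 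With that strengthening, your proof closes and is, if anything, more explicit about the delicate points than the paper's.
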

\begin{proof}
We argue by induction over the number of locations $N$. The base case $N=1$ is true by hypothesis, and the case $N=2$ has been proven in Prop. \ref{prop:diffProb13x}. 
For $N>2$ and $t\leq t_{N-1}$, let $\zeta(x,t)$ be the trajectory over the first $N-1$ locations, so that $s(x,t) = s^{(N-1)}(Re(\zeta(x,\tau_{N-2}),(l_{N-2},l_{N-1})),t-t_{N-1})$. 
By the induction hypothesis, $\zeta(x,t)$ is differentiable at $x_0$. Then $\zeta$ and $s^{(N-1)}$ satisfy the conditions of the case $N=2$. 
\end{proof}

\todo[inline]{What about trajectories ending at $t=t_1$? then no, trajectory is not time-differentiable there - in fact, the trajectory is a multi-function at the transition times, so differentiability is not even a meaningful concept there.}
%===========diff in t===============================
Differentiability with respect to time is easily proven: 

\begin{proposition}\label{prop:diffProb13t}
Let $x_0 \in E_0$ and $t \in (t_{N-1} , T)$, that is, a time at which the trajectory is in the last location. Consider the hybrid trajectory over $N \geq 1$ locations. Then $s(x_0,t)$ is differentiable in $t$ over $[t_{N-1}, T)$.
	\end{proposition}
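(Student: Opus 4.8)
The plan is to exploit the fact that, in contrast to the $x$-differentiability results, here the initial point $x_0$ is held fixed. Consequently every quantity that caused trouble in Proposition \ref{prop:diffProb13x} --- the transition times and the reset images at which the trajectory enters each successive location --- is now a \emph{constant} rather than a function of the variable of differentiation. Reusing the decomposition from Proposition \ref{prop:diffProb13x_manylocs},
\[ s(x_0,t) = s^{(N-1)}\bigl(w,\, t - t_{N-1}\bigr), \qquad w \triangleq Re(\zeta(x_0,\tau_{N-2}),(l_{N-2},l_{N-1})), \]
the vector $w \in \Re^n$ and the scalar $t_{N-1}$ are fixed numbers determined by $x_0$. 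So the whole question reduces to differentiability in $\tau = t - t_{N-1}$ of the single-location trajectory $s^{(N-1)}(w,\cdot)$, composed with the affine reparametrization $t \mapsto t - t_{N-1}$.

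Next I would write the explicit affine-flow solution in the last location. For $\tau \geq 0$,
\[ s^{(N-1)}(w,\tau) = e^{\tau A_{N-1}} w + e^{\tau A_{N-1}}\int_0^{\tau} e^{-\sigma A_{N-1}} b_{N-1}\, d\sigma, \]
which is exactly the form already analysed in Proposition \ref{prop:diffProb13x}. The first summand is smooth in $\tau$ since $\tau \mapsto e^{\tau A_{N-1}}$ is entire; the integral term is differentiable by the Fundamental Theorem of Calculus (the integrand being continuous in $\sigma$), with derivative $e^{\tau A_{N-1}} b_{N-1}$. Hence $\tau \mapsto s^{(N-1)}(w,\tau)$ is differentiable for every $\tau \geq 0$, and the resulting $t$-derivative is simply $\frac{d}{dt} s(x_0,t) = A_{N-1} s(x_0,t) + b_{N-1}$, which merely recovers the fact that the trajectory solves its own flow equation. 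Composing with the affine map $t \mapsto t - t_{N-1}$ preserves differentiability over $[t_{N-1},T)$.

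The only point requiring care is the left endpoint $t = t_{N-1}$, corresponding to $\tau = 0$: there I would claim one-sided (right) differentiability only, which is all that ``differentiable over $[t_{N-1},T)$'' should mean at an endpoint. This is legitimate because the relative-time representation $s^{(N-1)}(w,\tau)$ is defined and smooth on the closed ray $\tau \geq 0$, so its right-derivative at $\tau = 0$ exists and equals $A_{N-1}w + b_{N-1}$. I do \emph{not} claim two-sided differentiability at $t_{N-1}$ --- as the marginal note observes, the reset generally introduces a kink across a transition --- but the half-open interval in the conclusion is chosen precisely to sidestep this. Since everything is explicit, there is essentially no hard step: the substance of the proposition is the bookkeeping observation that freezing $x_0$ turns the transition data into constants, after which smoothness of affine flows does the rest.
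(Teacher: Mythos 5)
Your proof is correct and follows essentially the same route as the paper's: both reduce $s(x_0,\cdot)$ on $(t_{N-1},T)$ to the last-location trajectory $s^{(N-1)}$ shifted by the (now constant) transition time, and invoke smoothness of the affine flow in time. Your version is merely more explicit --- writing out the variation-of-constants formula, naming the entry point $w$ rather than overloading $x_0$ as the paper does, and treating the one-sided derivative at $t_{N-1}$ --- none of which changes the substance.
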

	\begin{proof}
	$s(x_0,t) = s^{(N-1)}(x_0,t-t_{N-1})$. The location-specific trajectories $s^{(i)}(x,\cdot)$ are solutions of differential equations involving at least the first time derivative. Therefore, they are smooth over $(t_{N-1},T)$. This implies differentibility of the hybrid trajectory $s(x_0,\cdot)$ over the same interval. At $t=T$, the trajectory is only left-differentiable, since it's undefined from the right.
	\end{proof}
	
%================pi diff=================================	
The following result is now a trivial application of the chain rule to $p_i \circ s$:
\begin{proposition}\label{prop:diffProb13t}
Let $x_0 \in E_0$, $t \in (t_{N-1} , T)$. If $p_i$ is differentiable for all $i=1,\ldots,k$, then $G_\Wc$ is differentiable in $z$ over $E_0\times \Re_+ \times \Re$.
	\end{proposition}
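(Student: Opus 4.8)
The plan is to reduce the statement to the \emph{joint} differentiability of the hybrid trajectory $s(x,t)$ in $(x,t)$ and then apply the chain rule componentwise. First I would note that $G_\Wc$ does not depend on the slack variable $\nu$, so its partial with respect to $\nu$ is identically zero and trivially continuous; hence differentiability in $z=(x,t,\nu)$ is equivalent to differentiability in $(x,t)$. Each entry of $G_\Wc$ is the composite $p_i\circ s$, and since the $p_i$ are differentiable by hypothesis, it suffices to establish differentiability of $s$ in $(x,t)$. The chain rule then produces the $i$-th row $Dp_i(s(x,t))\cdot D_{(x,t)}s(x,t)$, and stacking these rows gives the Jacobian of $G_\Wc$ and establishes the claim.

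The real content is therefore that $s(x,t)$ is jointly differentiable in $(x,t)$ over $E_0\times(t_{N-1},T)$. Proposition~\ref{prop:diffProb13x_manylocs} and the preceding time-differentiability proposition already furnish the two partials $D_x s$ and $\partial_t s$ \emph{separately}, but separate partial differentiability does not by itself yield joint (Fr\'echet) differentiability, and this is the gap I must close. The clean route is to show the partials are continuous, so that $s$ is $C^1$ and hence differentiable in $(x,t)$. For this I would revisit the explicit building blocks used in the earlier proofs: on the last location the trajectory has the closed form $s(x,t)=e^{(t-t_{N-1})A_{N-1}}w(x)+e^{(t-t_{N-1})A_{N-1}}\int_0^{t-t_{N-1}}e^{-\tau A_{N-1}}b_{N-1}\,d\tau$, where $w(x)$ is the (differentiable) state at entry into $l_{N-1}$, obtained by composing the smooth flows, the differentiable resets $Re(\cdot,e)$ (Assumption b), and the differentiable transition times $t_x$ (Assumption c, via Lemma III.3 of \cite{LygerosJSZS03tac}). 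Each ingredient is continuously differentiable jointly in its arguments: the matrix exponentials and the Fundamental-Theorem integral contribute continuous partials in both $x$ and $t$, and the transition times $t_j$ enter only through a $C^1$ dependence on $x$. Consequently the composition is $C^1$ in $(x,t)$.

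I expect the passage from separate to joint differentiability to be the main obstacle. Since the earlier propositions were each stated with one variable held fixed, one cannot merely quote them; one must either verify continuity of $D_x s$ and $\partial_t s$, as sketched above, or re-run the inductive composition of Proposition~\ref{prop:diffProb13x_manylocs} with $t$ treated as a free variable alongside $x$. A secondary technical point concerns the domain: the hypotheses require $t\in(t_{N-1},T)$, the open interval on which the trajectory lies strictly inside the final location, so the honest conclusion is differentiability over $E_0\times(t_{N-1},T)\times\Re$ rather than all of $E_0\times\Re_+\times\Re$; at the transition times $t_j$ the trajectory is only one-sided and joint differentiability genuinely fails, as the remark on trajectories ending at $t=t_1$ already flags. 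Once continuity of the partials is in hand, the conclusion follows immediately from the $C^1$-implies-differentiable theorem together with the chain rule applied to $p_i\circ s$.
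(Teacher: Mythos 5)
Your proof is correct, and it is genuinely more careful than the paper's own treatment. The paper disposes of this proposition in a single line --- ``a trivial application of the chain rule to $p_i \circ s$'' --- resting on Proposition~\ref{prop:diffProb13x_manylocs} (differentiability in $x$ for each fixed $t$) and the preceding time-differentiability proposition (differentiability in $t$ for each fixed $x$), thereby silently treating separate differentiability in each block of variables as if it implied joint (Fr\'echet) differentiability of $s$ in $(x,t)$. That implication is false in general: $g(x,y)=xy/(x^2+y^2)$, $g(0,0)=0$, is differentiable in $x$ for each fixed $y$ and in $y$ for each fixed $x$, yet is not even continuous at the origin. You correctly identify this as the real content of the proposition and close the gap by showing the partials are continuous, so that $s$ is $C^1$ on $E_0\times(t_{N-1},T)$ and hence jointly differentiable; the reduction eliminating $\nu$ and the componentwise chain rule for $p_i\circ s$ then match the paper's skeleton. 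Two refinements are worth recording. First, your $C^1$ route quietly upgrades Assumptions b--c from ``differentiable'' to ``continuously differentiable'' for the resets and transition times; this can be avoided by observing that in the final location the time dependence enters only through the jointly smooth (indeed analytic) map $(\tau,w)\mapsto e^{\tau A_{N-1}}w+\int_0^{\tau}e^{(\tau-\sigma)A_{N-1}}b_{N-1}\,d\sigma$, precomposed with the map $(x,t)\mapsto\bigl(t-t_{N-1}(x),\,w(x)\bigr)$, whose components are each differentiable functions of $x$ alone plus a linear term in $t$ and hence jointly differentiable; the ordinary chain rule then yields joint differentiability of $s$ with no continuity hypothesis on the $x$-partials. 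Second, your domain correction is right and aligns with the paper's own marginal note: at the transition times the trajectory is multi-valued and differentiability fails, so the honest conclusion is differentiability over $E_0\times(t_{N-1},T)\times\Re$ rather than $E_0\times\Re_+\times\Re$ as the proposition literally states; the paper's statement is imprecise on this point and you were right not to reproduce it.
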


} % if REPORT
{
	\begin{proof}
	See full length technical report\cite{hya_gf_descentonline}.
	\end{proof}
	
	% differentiableProblem13

We now address how Problem \eqref{eqHybridProg} might be solved. Functions $F$, $G_0$ and $G_\Wc$ are differentiable in $z$. It is not clear that $p_i(s_x(t);x_0)$, \emph{as a function of $z$}, is differentiable. We now show that under some assumptions on the $p_i$, for trajectories of linear systems, $p_i$ is in fact differentiable in both $x$ and $t$, over an appropriate range of $t$. This implies differentiability in $z$. Therefore, standard gradient-based optimizers can be used to solve Problem \eqref{eqHybridProg}. 
Due to lack of space, proofs are omitted. They can be viewed in the full-length report on-line\cite{hya_gf_descentonline}.
\todo[inline]{Put correct URL}

$s^{(i)}_x(\tau)$ will denote the point, at time $\tau$, on the trajectory starting at $x \in Inv(l_i)$, and evolving according to the dynamics of location $i$. 
$s_x(t)$ (without superscript) will denote the hybrid trajectory, traversing one or more locations. 
Recall that $E(x_0)$ is the ellipsoid of robustness centered at $x_0$. Then define $E_0 \triangleq \mathrm{int}(E(X_0) \bigcap X_0)$,
and $t_i^+ = min\{t | s(E_0, t) \subset Inv(l_i) \}$, 
where $s(E_0,t)$ is the image, under $s$ of $E_0$ at time $t$. $t_i^+$ is well-defined as shown in \cite{JuliusFALP07hscc}.

\begin{proposition} \label{prop:diffProb13x_manylocs}
Let $t \in [t_{N-1}^+ ,T]$, and consider the hybrid trajectory over $N \geq 1$ locations. If $s^{(0)}$ is differentiable in $x$ over $E_0$, the $s^{(i)}$, $i>0$, are differentiable in $x$ over $Inv(l_i)$, and $p_i(y;x_0)$ are differentiable in $x$ over $\Re^n$ for all $i$, then $s_x(t)$ is differentiable at $x_0$.
\end{proposition}

\begin{proposition} \label{prop:diffProb13t}
Let $x \in E_0$ and $t \in [t_{N-1}^+ , T)$, and consider the hybrid trajectory over $N \geq 1$ locations. Then $s_x(t)$ is differentiable in $t$ over $[t_{N-1}^+ , T)$.
	\end{proposition}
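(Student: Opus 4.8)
The plan is to exploit the fact that the interval $[t_{N-1}^+ , T)$ has been chosen precisely so that no location switch occurs on it. By the definition of $t_{N-1}^+$, for every $t \geq t_{N-1}^+$ the image $s(E_0,t)$ lies in $Inv(l_{N-1})$; in particular, for the fixed point $x \in E_0$, the trajectory $s_x$ has already performed its last transition, into $l_{N-1}$, at some absolute time $\theta_x \leq t_{N-1}^+$. Since the visited location sequence has length $N$ and transitions are taken as soon as possible, once in $l_{N-1}$ the trajectory stays there throughout $[t_{N-1}^+ , T)$. Hence on this interval the hybrid trajectory coincides with a single location-specific trajectory of the last location, $s_x(t) = s^{(N-1)}_{y}(t - \theta_x)$, where $\theta_x$ is the (fixed) absolute entry time into $l_{N-1}$ and $y$ is the corresponding post-reset state; both are constants once $x$ is fixed.

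Next I would invoke the affine dynamics of $l_{N-1}$. Because $\dot z = A_{N-1} z + b_{N-1}$, the location-specific solution is
\[ s^{(N-1)}_{y}(\tau) = e^{A_{N-1}\tau} y + \int_0^\tau e^{A_{N-1}(\tau-\sigma)} b_{N-1}\, d\sigma, \]
which is $C^\infty$ (indeed analytic) in the relative time $\tau$. The map $t \mapsto \tau = t - \theta_x$ is affine, so the composition $t \mapsto s^{(N-1)}_{y}(t - \theta_x)$ is differentiable --- in fact smooth --- in $t$ throughout $[t_{N-1}^+ , T)$. This yields differentiability of $s_x$ in $t$ on the interior, with the one-sided derivatives handling the endpoints: at $t = T$ the trajectory is only left-differentiable, since it is undefined for $t > T$.

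The main obstacle --- and the reason the interval starts at $t_{N-1}^+$ rather than at $0$ --- is that the hybrid trajectory is generally \emph{not} differentiable at the transition instants: across a guard the governing vector field switches, and, worse, the reset map $Re$ may introduce a genuine jump, so $s_x(\cdot)$ need not even be continuous there, let alone differentiable. Restricting to $[t_{N-1}^+ , T)$ sidesteps this entirely, because $t_{N-1}^+$ is by construction an upper bound, \emph{uniform over all of} $E_0$, on the last transition time, guaranteeing that the whole interval lies within a single smooth affine flow. The only residual subtlety is the left endpoint: should the particular $x$ cross the final guard exactly at $t_{N-1}^+$ (i.e. $\theta_x = t_{N-1}^+$), the asserted right-differentiability is to be read as differentiability of the post-reset flow, which is again smooth from the right under the ``as soon as possible'' transition convention.
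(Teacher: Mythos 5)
Your proposal is correct and follows essentially the same route as the paper's own proof: on $[t_{N-1}^+,T)$ the hybrid trajectory reduces to the single location-specific flow of $l_{N-1}$, whose solution is smooth in time, giving differentiability (with one-sided differentiability at the endpoints). Your version is merely more explicit than the paper's --- writing out the affine flow $e^{A_{N-1}\tau}y+\int_0^\tau e^{A_{N-1}(\tau-\sigma)}b_{N-1}\,d\sigma$, noting that $t_{N-1}^+$ bounds the last entry time uniformly over $E_0$, and handling the corner case $\theta_x=t_{N-1}^+$ --- but the underlying argument is identical.
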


	\todo[inline]{diff in x and t $\Rightarrow$ diff in z}
	
}	 
	 
	 We choose Sequential Quadratic Programming (SQP), as a good general-purpose optimizer to solve Problem~\ref{eqHybridProg}. 
	 SQP is a Q-quadratically convergent iterative algorithm.
At each iterate, $G_\Wc(x_i,t_i,\nu_i)$ is computed by simulating the system at $x_i$. This is the main computational bottleneck of this method, and will be discussed in more detail in the Experiments section.

		\subsection{Convergence to a local minimum}\label{sec:iteratedProblem13}
	
	Solving Problem \eqref{eqHybridProg}, for a given $\Wc$, produces a descent direction for the robustness function. However, one can produce examples where a local minimum of $F(\cdot)$ is not a local minimum of the robustness function $f$. 	
	This section derives conditions under which \emph{repeated} solution of Problem \eqref{eqHybridProg} yields a local minimum of the robustness function.
	
	For $i=0,1,2,\dots$, let $x_i \in X_0 \bigcap E(x_{i-1})$, and let $t_i$ be the time when $s_{x_i}$ is closest to $\unsafe$. 
	Let $\Wc_i = P(x_i; t_i)$ be the descent set for this trajectory. 
	For each $\Wc_i$, one can setup the optimization Problem (\ref{eqHybridProg}) with $\Wc = \Wc_i$, and initial point $(x_i, t_i, 0)$; this problem is denoted by Prob\ref{eqHybridProg}[$\Wc_i$]. 
	(Recall from the proof of Proposition \ref{prop:hybDesc} that $\nu = 0$ at the initial point of the optimization problem).
	Finally, let $z_i^* = (x_i^*, t_i^*, \nu_i^*)$ be the minimum obtained by solving Prob\ref{eqHybridProg}[$\Wc_i$].
	
%\begin{algorithm}[tb]
%\caption{Robustness Ellipsoid Descent (RED)}
%{\bf Input}: An initial point $x_0 \in X_0$, and corresponding $t_0$. \\
%{\bf Output}: $z_Q$.
%\label{alg:red}
%
%\begin{algorithmic}[1]
%\State Initialization: $i = 0$
%\State Compute $z_i^*=(x_i^*, t_i^*, \nu_i^*)$ 
%       = minimum of Prob\ref{eqHybridProg}[$\Wc_i$].
%\While {$\nu_i^* < 0$}
% \State $x_{i+1} \leftarrow x_i^*$
% \State $t_{i+1} = \arg\min_t d_\unsafe(s_{x_{i+1}}(t))$
% %\State $s_{i+1} =$ point of $s_{x_{i+1}}$ closest to $\unsafe$ 
% \State $\Wc_{i+1} = P(x_{i+1})$ 
% \State Compute $z_i^*=(x_i^*, t_i^*, \nu_i^*)$ 
%       = min of Prob\ref{eqHybridProg}[$\Wc_{i+1}$].
% \State $i=i+1$ 
%\EndWhile\\
%\State Return $z_Q \triangleq z_i^*$ 
% 
%\end{algorithmic}
%\end{algorithm}	

\begin{algorithm}[tb]
\caption{Robustness Ellipsoid Descent (RED)}
{\bf Input}: An initial point $x_0 \in X_0$, and corresponding $t_0$. \\
{\bf Output}: $z_Q$.
\label{alg:red}

\begin{algorithmic}[1]
\State Initialization: $i = 0$
\State Compute $z_i^*=(x_i^*, t_i^*, \nu_i^*)$ 
       = minimum of Prob\ref{eqHybridProg}[$\Wc_i$].
\While {$\nu_i^* < 0$}
 \State $x_{i+1} \leftarrow x_i^*$
 \State $t_{i+1} = \arg\min_t d_\unsafe(s_{x_{i+1}}(t))$
 %\State $s_{i+1} =$ point of $s_{x_{i+1}}$ closest to $\unsafe$ 
 \State $\Wc_{i+1} = P(x_{i+1})$ 
 \State Compute $z_i^*=(x_i^*, t_i^*, \nu_i^*)$ 
       = min of Prob\ref{eqHybridProg}[$\Wc_{i+1}$].
 \State $i=i+1$ 
\EndWhile\\
\State Return $z_Q \triangleq z_i^*$ 
 
\end{algorithmic}
\end{algorithm}		
	Algorithm~\ref{alg:red} describes how to setup a sequence of optimization problems that leads to a local minimum of $f$. It is called Robustness Ellipsoid Descent, or RED for short.
	
	\begin{proposition}\label{prop:REDterminates}
	Algorithm~\ref{alg:red} (RED) terminates
	\end{proposition}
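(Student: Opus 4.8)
The plan is to drive termination entirely from the strict version of Proposition~\ref{prop:hybDesc} together with the fact that the robustness function is nonnegative. First I would record what the while-loop condition actually buys us: the loop executes iteration $i$ only when $\nu_i^* < 0$, so the returned minimizer $z_i^* = (x_i^*, t_i^*, \nu_i^*)$ satisfies $G_\Wc(z_i^*) \le \nu_i^* < 0$, i.e. $s_{x_i^*}(t_i^*)$ lies strictly inside $\Wc_i = P(x_i; t_i) = \{x \mid d_\unsafe(x) \le f(x_i)\}$. Hence $d_\unsafe(s_{x_{i+1}}(t_i^*)) < f(x_i)$, and since $f(x_{i+1}) = \min_t d_\unsafe(s_{x_{i+1}}(t)) \le d_\unsafe(s_{x_{i+1}}(t_i^*))$, every completed pass of the loop yields the \emph{strict} descent $f(x_{i+1}) < f(x_i)$. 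Note that each $x_{i+1} \in E(x_i)$, so by Theorem~\ref{thm:bis} the location word is preserved along the whole sequence and all the quantities above are well defined.

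Next I would observe that $f$ takes values in $\Re_+$, so the iterates generate a sequence $f(x_0) > f(x_1) > f(x_2) > \cdots \ge 0$ that is strictly decreasing and bounded below, hence convergent to some $f^\infty \ge 0$; in particular the decrements $f(x_i) - f(x_{i+1})$ tend to $0$. If at some finite $i$ we reach $f(x_i)=0$ a falsifying trajectory has been found and the loop is exited, so the remaining task is to rule out an infinite run with $f^\infty > 0$. For this I would argue by contradiction: assuming $\nu_i^* < 0$ for all $i$, the iterates live in the bounded (hence compact) set $X_0$, so a subsequence converges to some $\bar x$, and by continuity of $f$ along the fixed location word one gets $f(x_i) \to f(\bar x) = f^\infty$. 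The goal is then to show that a strictly negative optimum of Prob\ref{eqHybridProg}$[\Wc_i]$ cannot persist as $x_i \to \bar x$, using that the nested descent sets $\Wc_{i+1} \subset \Wc_i$ are strictly shrinking toward the sublevel set $\{d_\unsafe \le f^\infty\}$.

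I expect this final step to be the main obstacle, since a strictly decreasing bounded sequence need not reach its limit in finitely many steps; converting the asymptotic vanishing of the decrements into genuine finite stopping requires a uniform descent estimate that is not available in general. I would therefore close the argument using the safeguards already present in the method: the explicit iteration cap discussed for the iterated feasibility program in Section~\ref{subsecImpl}, and the numerical tolerance against which the SQP solver evaluates the test $\nu_i^* < 0$. Under either safeguard the strict-descent-plus-lower-bound argument gives termination at once, and the clean deterministic statement that survives is the following: each pass of the loop strictly lowers the bounded-below quantity $f$, so only finitely many passes can occur before the descent stalls and the exit condition $\nu_i^* \ge 0$ is met.
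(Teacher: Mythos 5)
Your core argument is the same as the paper's: Proposition~\ref{prop:hybDesc} combined with the strict loop guard $\nu_i^* < 0$ gives the strict descent $f(x_{i+1}) < f(x_i)$, so $(f(x_i))_i$ is strictly decreasing, bounded below by $0$, and hence convergent. Where you diverge is in what you do with this, and it is to your credit: you correctly observe that convergence of the robustness values does \emph{not} by itself bound the number of loop iterations, and this is precisely the point at which the paper's own printed proof stops --- it establishes that $(f(x_i))$ converges to some $r \geq 0$ and then literally trails off with the unanswered question ``But how to prove that this limit is indeed a minimum of $f$?''. As written, the paper proves convergence of the robustness sequence, not termination of the while loop; an infinite run with $\nu_i^* < 0$ at every pass is not excluded by monotonicity alone, exactly as you say. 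Your repair --- invoking the iteration cap already discussed in Section~\ref{subsecImpl}, or the numerical tolerance under which the guard effectively becomes $\nu_i^* < -\epsilon$, so that each pass forces $d_\unsafe(s_{x_{i+1}}(t_i^*)) \leq f(x_i) - \epsilon$ and at most $f(x_0)/\epsilon$ passes can occur --- is a legitimate way to make the claim true, at the price of turning the proposition into a statement about the implemented algorithm rather than the idealized one. Your abandoned compactness/subsequence sketch is also rightly abandoned: without a uniform descent estimate it cannot rule out $f^\infty > 0$ with $\nu_i^* < 0$ forever, and no such estimate is available in general.

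One caveat on the tolerance route: the step from $\nu_i^* \leq -\epsilon$ to a decrement in $f$ of at least $\epsilon$ uses that the constraint functions $p_j$ encode $d_\unsafe(\cdot) - f(x_i)$ directly, as the sublevel description \eqref{eqPDef} of $\Wc_i = P(x_i;t_i)$ suggests; for an arbitrary differentiable representation of $\Wc_i$ the margin $\epsilon$ in the constraints need not translate into a decrement in $f$ bounded away from zero, so you should state this normalization (or a bounded-scaling hypothesis on the $p_j$) explicitly. With that assumption made precise, your proof is complete and is strictly stronger than the paper's, which leaves the decisive step open.
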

	
	\begin{proof}
	Proposition \ref{prop:hybDesc} holds for each problem Prob\ref{eqHybridProg}[$\Wc_{i}$]. Therefore, each solution with $\nu_i < 0$ gives a trajectory $s_{x_i^*}$ with a smaller robustness than $s_{x_{i-1}^*}: f(x_i^*) < f(x_{i-1}^*)$. Thus $(f(x_i))_{i\in N}$ is a decreasing sequence, lower bounded by $0$. Therefore, it converges to a limit $r \geq 0$. But how to prove that this limit is indeed a minimum of f? 

	\end{proof}
	
	\begin{proposition}\label{prop:localMin}
	Assume that Algorithm \ref{alg:red} halts at a point $z_Q= (x_Q,t_Q,\nu_Q)$, for which there exist $\overline{t}_1, \overline{t}_2$ such that: 
	\begin{itemize}
	\item $0 \leq \overline{t}_1 \leq t_Q \leq \overline{t}_2$
	\item $d_\unsafe(s_{x_Q}(t)) > f(x_Q) \forall t \in T_R \triangleq [0, \overline{t}_1] \cup [\overline{t}_2, T)$, and 
	\item $\overline{t}_2-\overline{t}_1$ is `sufficiently small'. 
	\end{itemize}
	Then $x_Q$ is a local minimum of the robustness function $f$.	
	\end{proposition}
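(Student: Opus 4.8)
The plan is to show that the halting condition, together with the three hypotheses, forces $d_\unsafe(s_x(t))\ge f(x_Q)$ for every $t\in[0,T]$ and every $x$ near $x_Q$ in $X_0$; since $D_\unsafe(v,\cdot)\ge d_\unsafe(\cdot)$ pointwise (it equals $d_\unsafe$ in $l_\unsafe$ and $+\infty$ elsewhere), this at once gives $f(x)=\min_t D_\unsafe(\eta_x(t))\ge f(x_Q)$. To begin, I would unpack halting. Each instance of Problem~\eqref{eqHybridProg} starts at the feasible iterate $(x_i,t_i,0)$ and the optimizer only decreases $F$ (cf.\ the proof of Prop.~\ref{prop:hybDesc}), so termination with $\nu_Q\ge 0$ forces $\nu_Q=0$ and makes $z_Q=(x_Q,t_Q,0)$ a local minimizer of $F$ over the feasible set of Prob\ref{eqHybridProg}$[\Wc_Q]$, $\Wc_Q=P(x_Q;t_Q)$. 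At $z_Q$ the ellipsoid constraint is slack, $G_E(z_Q)=-1<0$, so $G_E<0$ on a neighborhood of $x_Q$; hence near $z_Q$ only $G_\Wc$ can keep $\nu\ge 0$. Writing $\nu(x,t)=\max\{G_E(x),G_\Wc(x,t)\}$ for the least feasible slack at $(x,t)$ — continuous in $(x,t)$ by the differentiability results and $\to 0$ as $(x,t)\to(x_Q,t_Q)$ — local minimality gives $\nu(x,t)\ge 0$ nearby, and since $G_E<0$ this reduces to $G_\Wc(x,t)\ge 0$, i.e.\ $d_\unsafe(s_x(t))\ge f(x_Q)$, on a product neighborhood $B_{\delta_1}(x_Q)\times(t_Q-\eta,t_Q+\eta)$. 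I read ``$\overline t_2-\overline t_1$ sufficiently small'' as the requirement that $[\overline t_1,\overline t_2]\subseteq(t_Q-\eta,t_Q+\eta)$, so the bound covers the central interval: $d_\unsafe(s_x(t))\ge f(x_Q)$ for all $x\in B_{\delta_1}(x_Q)\cap X_0$ and $t\in[\overline t_1,\overline t_2]$.

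Next I would dispatch the complementary times $T_R=[0,\overline t_1]\cup[\overline t_2,T)$. By the second hypothesis $d_\unsafe(s_{x_Q}(t))>f(x_Q)$ on $T_R$, and a routine compactness argument (the closure of $T_R$ is compact, $t\mapsto d_\unsafe(s_{x_Q}(t))$ is continuous, the endpoint $t=T$ handled by continuity) yields a uniform margin $m-f(x_Q)>0$ with $m=\min_{\overline{T_R}}d_\unsafe(s_{x_Q}(\cdot))$. Using joint continuity of $(x,t)\mapsto d_\unsafe(s_x(t))$ near $\{x_Q\}\times\overline{T_R}$ — which holds because, by Theorem~\ref{thm:bis} and Assumptions a--d, all nearby initial points follow $\loc(\eta_{h_0})$ so the hybrid trajectory depends continuously on its initial condition — a contradiction/compactness argument produces $\delta_2>0$ with $d_\unsafe(s_x(t))>f(x_Q)$ for all $x\in B_{\delta_2}(x_Q)$ and $t\in T_R$.

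Finally I would combine: set $\delta=\min\{\delta_1,\delta_2\}$ and shrink it so that $B_\delta(x_Q)\subseteq E(x_Q)$, which is possible because $x_Q$ is the center of $E(x_Q)$. For every $x\in B_\delta(x_Q)\cap X_0$ and every $t\in[0,T)=[\overline t_1,\overline t_2]\cup T_R$ we have $d_\unsafe(s_x(t))\ge f(x_Q)$, hence $D_\unsafe(\eta_x(t))\ge f(x_Q)$, and taking the minimum over $t$ gives $f(x)\ge f(x_Q)$. Thus $x_Q$ is a local minimum of $f$.

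The main obstacle is Step~1: rigorously converting the SQP halting statement — local minimality of the slack objective $F$ in the joint variable $z=(x,t,\nu)$ — into the clean pointwise bound $d_\unsafe(s_x(t))\ge f(x_Q)$ on a \emph{product} neighborhood, and pinning down the quantitative content of ``sufficiently small'' as $[\overline t_1,\overline t_2]\subseteq(t_Q-\eta,t_Q+\eta)$ with $\eta$ the time-radius furnished by that local minimality. Two points need care: that the neighborhood obtained is a genuine neighborhood of $x_Q$ in $X_0$ (it is, since $x_Q\in\mathrm{int}(E(x_Q))$), and that $z_Q$ is an honest local minimizer and not merely a KKT point — which is fine here because the starting iterate $(x_Q,t_Q,0)$ already realizes $\nu=0$ and halting means it was not improved.
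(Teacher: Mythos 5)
Your proof is correct and follows essentially the same route as the paper's: you convert the halting condition into local minimality of $F$ at $z_Q=(x_Q,t_Q,0)$, extract a product neighborhood $B(x_Q,\delta_1)\times(t_Q-\eta,t_Q+\eta)$ on which $d_\unsafe(s_x(t))\geq f(x_Q)$, read ``sufficiently small'' as $[\overline{t}_1,\overline{t}_2]$ fitting inside that time window, and treat $T_R$ by a positive-margin-plus-continuity argument --- precisely the paper's two-tube construction, with your margin $m-f(x_Q)$ playing the role of the paper's $\Lambda=\inf_{t\in T_R}d_{\Wc_Q}(s_{x_Q}(t))$. Your explicit reduction through the slack $\nu(x,t)=\max\{G_E(x),G_\Wc(x,t)\}$, using that $G_E$ is inactive at $z_Q$, is merely a slightly more careful rendering of the step the paper states as a direct chain of equivalences, so the two arguments coincide in substance.
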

	
	\begin{proof}
	We assume that the trajectory starting at $x_Q$ is safe - otherwise, we're done since we found an unsafe trajectory.
		
	Two tubes will be constructed: one contains $s_{x_Q}$ over $(\overline{t}_1, \overline{t}_2)$, the other contains it over $T_R$. They are such that no trajectory in them gets closer to $\Wc_Q$ than $s_{x_Q}$. Then it is shown that all trajectories in a neighborhood of $x_Q$ are contained in these tubes, making $x_Q$ a local minimum of the robustness function $f$.	
	
\begin{figure}[ht]
\centering
\includegraphics[width=200pt]{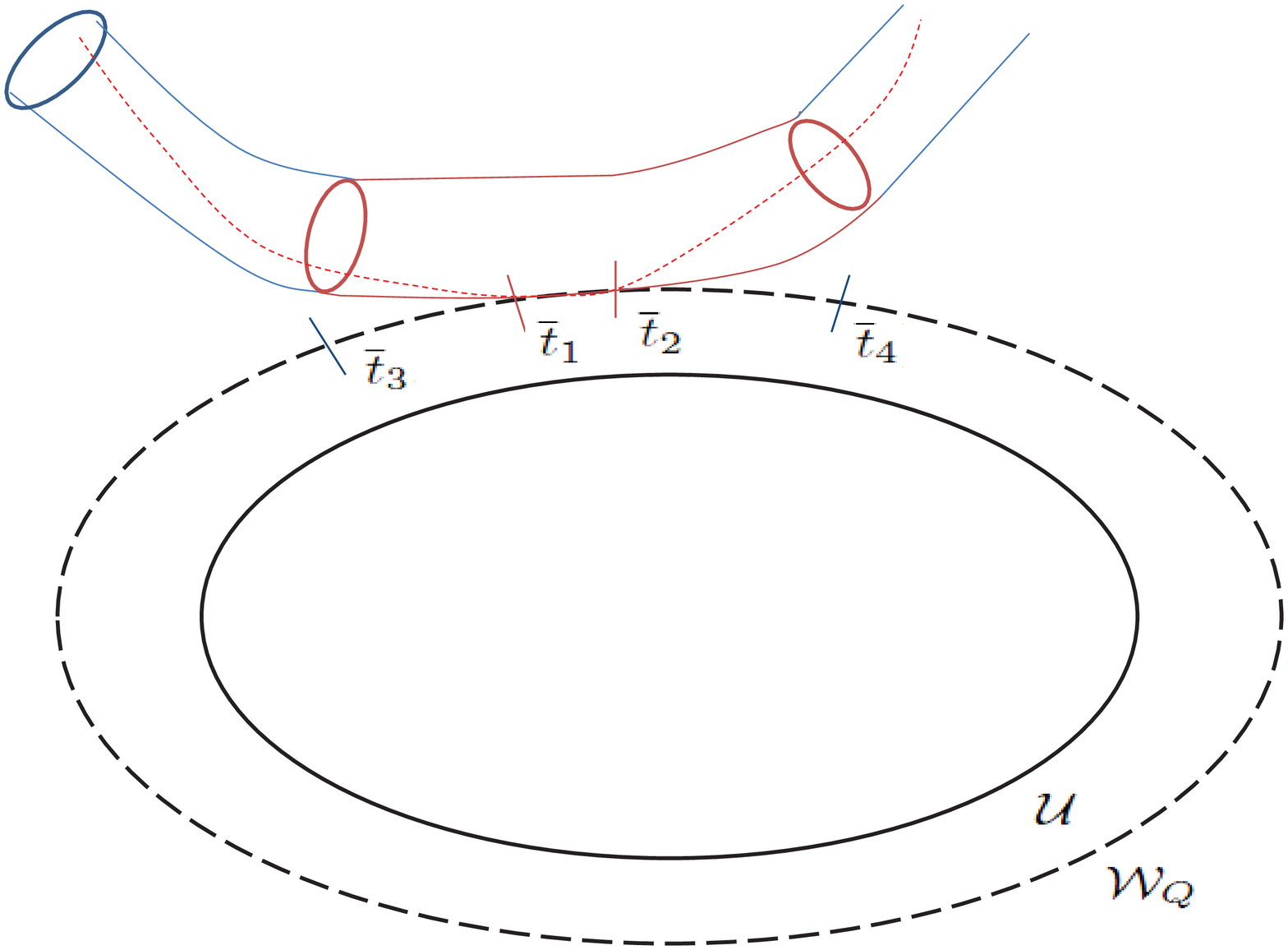}
\caption{[Proof of Prop.\ref{prop:localMin}] All trajectories starting in a neighborhood of $x_Q$ will be contained in the orange tube over $T_R$ and in the green tube over $(\overline{t}_1, \overline{t}_2)$}
\label{fig:tubes}
\end{figure}

	By the halting condition, $\nu_Q = 0$. Since the optimizer always returns a local minimum of the objective function $F$, there exists a neighborhood $N(z_Q)$ of $z_Q$ such that for all $z \in N(z_Q), F(z) \geq F(z_Q) = \nu_Q = 0$
	\[\Leftrightarrow \forall (x, t, \nu) \in N(z_Q), s_{x}(t) \notin \textrm{int}\Wc_Q\]
	\begin{equation*}\label{eq:ztube}
	\Leftrightarrow \forall (x, t, \nu) \in N(z_Q), d_\unsafe(s_{x}(t)) \geq f(x_Q)
	\end{equation*}		
	
	 $N(z_Q)$ can be expressed as 	
	\[N(z_Q) = B(x_Q, \epsilon)\times (\overline{t}_3, \overline{t}_4) \times (-\nu_l, \nu_l)\] 
	\[\epsilon >0, \nu_l > 0, B(x_Q,\epsilon) \subset E(x_0)\cap X_0\]		
	(Since $\Re^n \times \Re_+ \times \Re$ is a finite product, the box and product topologies are equivalent, so it doesn't matter which one we use.)	
	
	We now precise the notion of `small enough': we require that 
	\begin{equation*}
	(\overline{t}_1,\overline{t}_2) \subseteq (\overline{t}_3,\overline{t}_4)
	\end{equation*}
	
	Therefore 
	\[\forall x \in B(x_Q, \epsilon), t\in (\overline{t}_1, \overline{t}_2), d_\unsafe(s_{x}(t)) \geq f(x_Q)\]
	
	Thus
	\begin{equation}\label{eq:tube1}
	\forall x \in B(x_Q, \epsilon), \inf_{t\in (\overline{t}_1, \overline{t}_2)}d_\unsafe(s_{x}(t)) \geq f(x_Q)
	\end{equation}	
	 
	We now study the behavior of trajectories starting in $B(x_Q, \epsilon)$ over the remaining time periord $T_R$. Recall that $\unsafe \subset \Wc_Q$. Let $w^o$ be any point on the boundary $\partial \Wc_Q$. Then
	
	\begin{equation*}
	\begin{split}
	\forall t \in T_R, d_\unsafe(s_{x_Q}(t)) > f(x_Q) = d_\unsafe(w^o) > 0  \\
	\Rightarrow \forall t\in T_R, d_{\Wc_Q}(s_{x_Q}(t)) > 0 
	\end{split}
	\end{equation*}
	
	Then
	\[\Lambda = \inf\{d_{\Wc_Q}(s_{x_Q}(t))|t\in T_R\} > 0\]

	$s_x$ is continuous as a function of $x$ for every $t$, therefore
	\begin{equation*}
	\exists \delta >0 \textrm{ s.t. }x \in B(x_Q,\delta) \Rightarrow d(s_{x_Q}(t),s_x(t)) < \Lambda
	\end{equation*}			
	
	Pick any point $w \in \Wc_Q$. Then $\forall x \in B(x_Q,\delta)$ and $t\in T_R$
	\begin{equation*}
	\begin{split}	
	d(s_{x_Q}(t), w) & \leq d(s_{x_Q}(t), s_x(t)) + d(s_x(t),w)\\
									 & < \Lambda + d(s_x(t),w)\\
	\Rightarrow d(s_x(t),w) & > d(s_{x_Q}(t), w) - \Lambda 								 
	\end{split}
	\end{equation*}
	
	Minimizing both sides over $w \in \Wc_Q$,
	\begin{equation*}
	\begin{split}
	d_{\Wc_Q}(s_x(t)) > d_{\Wc_Q}(s_{x_Q}(t)) - \Lambda \geq 0 \\
	\Rightarrow \inf_{t\in T_R} d_{\Wc_Q}(s_x(t)) \geq 0 
	\end{split}
	\end{equation*}
	
	In conclusion
	\begin{equation}\label{eq:tube2}
	\forall x \in B(x_Q,\delta), \inf_{t\in T_R} d_\unsafe(s_x(t)) \geq f(x_Q)
	\end{equation}	
	
	Putting Eqs.\eqref{eq:tube1} and \eqref{eq:tube2} together, it comes that $\forall x \in B(x_Q, \min\{\epsilon,\delta\})$
	\begin{equation*}
	\begin{split}
	\inf_{t\in \Re_+} d_\unsafe(s_x(t)) \geq f(x_Q) \\
	\Leftrightarrow \forall x \in B(x_Q, \min\{\epsilon,\delta\}),f(x) \geq f(x_Q)
	\end{split}
	\end{equation*}	
	
	and $x_Q$ is a local minimum of the robustness $f$.	
	\end{proof}

	\subsection{Ellipsoid Descent with Stochastic Falsification}\label{subsecREDMC}
As outlined in the introduction, the proposed method can be used as a sub-routine in a higher-level stochastic search falsification algorithm. A stochastic search will have a ProposalScheme routine: given a point $x$ in the search space, ProposalScheme will propose a new point $x'$ as a falsification candidate. Robustness Ellipsoid Descent (RED) may then be used to further descend from some judiciously chosen proposals. Algorithm \ref{alg:red+mc} illustrates the use of RED within the Simulated Annealing (SA) stochastic falsification algorithm of \cite{NghiemSFIGP10hscc}. $U(0,1)$ denotes a number drawn uniformly at random over $(0,1)$. Given two samples $x$ and $y$, BetterOf($x,y$) returns the sample with smaller robustness, and its robustness. 

\begin{algorithm}[tb]
\caption{RED with Simulated Annealing (SA+RED)}

{\bf Input}: An initial point $x \in X_0$. \\
{\bf Output}: Samples $\Theta \subset X_0$. \\
{\bf Initialization}: BestSoFar = $x$, $f_b = f$(BestSoFar)
\label{alg:red+mc}

\begin{algorithmic}[1]
\While {$f(x) > 0$}
\State $x' = $ ProposalScheme($x$) 
\State $\alpha =$ exp $(-\beta(f(x')-f_b))$
%\State $r \leftarrow$ UniformRandomReal(0,1)
\If {$U(0,1) \leq \alpha$}
	\State $x^* = $ RED($x'$)
	\State $x = x^*$ 	
\Else
	// Use the usual acceptance criterion
	\State $\alpha =$ exp $(-\beta(f(x')-f(x)))$
	%\State $r \leftarrow$ UniformRandomReal(0,1)
	\If {$U(0,1) \leq \alpha$}
		$x = x'$
	\EndIf	
\EndIf
\State (BestSoFar,$f_b$) = BetterOf($x$, BestSoFar)
\EndWhile %f(x)>0
\end{algorithmic}

\end{algorithm}

For each proposed sample $x'$, it is \emph{attempted} with certainty if its robustness is less than the smallest robustness $f_b$ found so far. Else, it is attempted with probability $e^{-\beta(f(x')-f_b)}$ (lines 3-4). If $x'$ is attempted, RED is run with $x'$ as starting point, and the found local minimum is used as final accepted sample (line 6).
If the proposed sample is not attempted, then the usual acceptance-rejection criterion is used: accept $x'$ with probability $\min\{1,e^{-\beta(f(x')-f(x))}\}$.
As in the original SA method, ProposalScheme is implemented as a Hit-and-Run sampler (other choices can be made). The next section presents experimental results on three benchmarks.
	
	\subsection{Experiments}\label{sec:Experiments}

This section describes the experiments used to test the efficiency and effectiveness of the proposed algorithm SA+RED, and the methods compared against it. 

We chose 3 navigation benchmarks from the literature: Nav0 (4-dimensional with 16 locations) is a slightly modified benchmark of~\cite{FehnkerI04}, and it is unknown whether it is falsifiable or not. Nav1 and Nav2 (4-dimensional with 3 locations) are the two hybrid systems in the HSolver library of benchmarks\cite{hsolverRatschan}, and are falsifiable. We also chose a filtered oscillator, Fosc (32-dimensional with 4 locations), from the SpaceEx library of benchmarks~\cite{FrehseCAV11}.
\ifthenelse{\boolean{REPORT}}
{We describe the Nav0 benchmark that we used, as it a slightly modified version of the benchmark in \cite{FehnkerI04}.

\begin{figure}[t]
\begin{center}
\includegraphics[width=8cm]{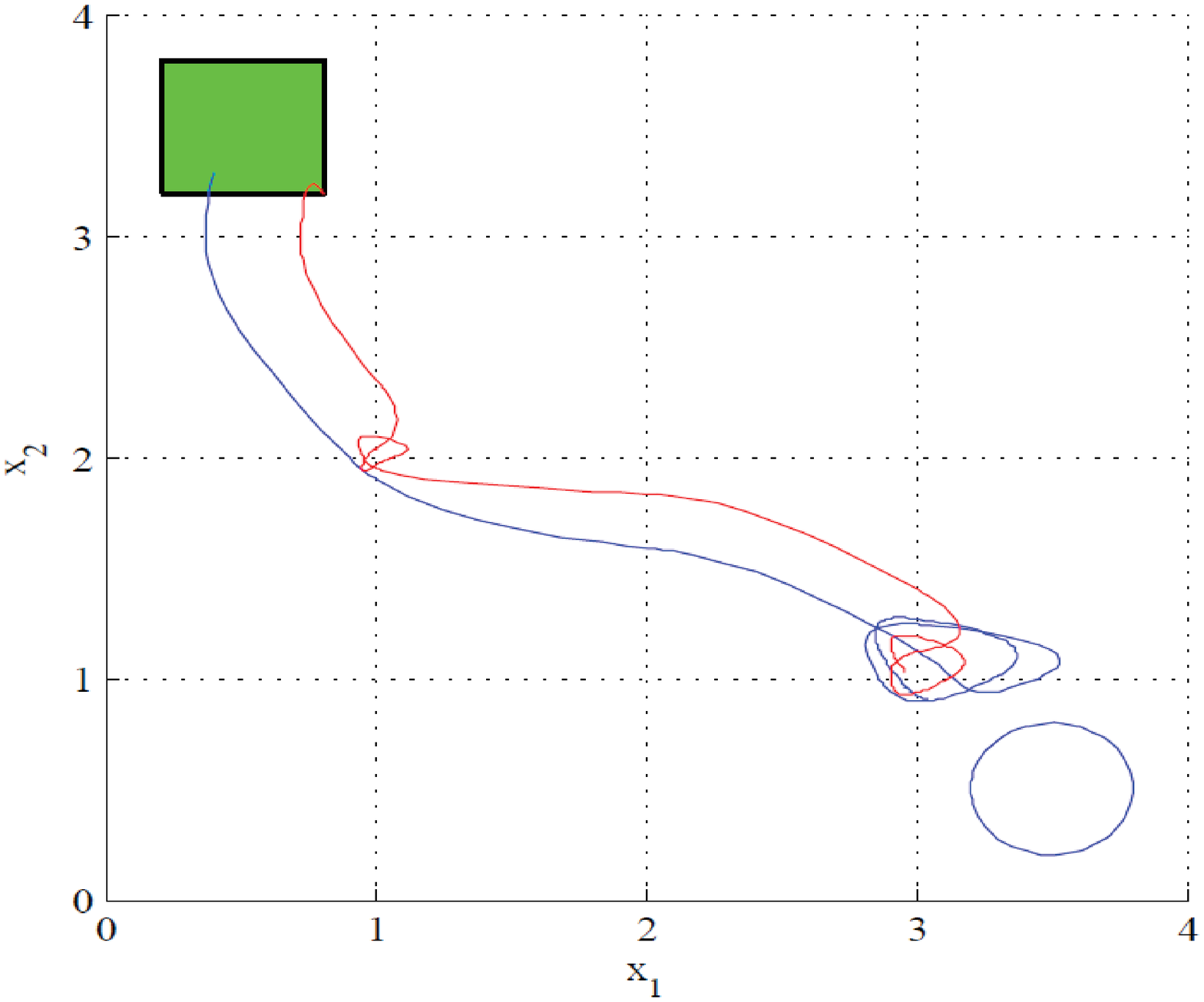}
\caption{The navigation benchmark example.} 
\label{fig:bench}
\end{center}
\end{figure}

\begin{exmp}[Navigation Benchmark \cite{FehnkerI04}]
The benchmark studies a hybrid automaton $\Hc$ with a variable number of discrete locations and 4 continuous variables $x_1$, $x_2$, $y_1$, $y_2$ that form the state vector $x = [x_1\; x_2\; y_1\; y_2]^T$. 
The structure of the hybrid automaton can be better visualized in Fig. \ref{fig:bench}. 
The invariant set of every $(i,j)$ location is an $1 \times 1$ box that constraints the position of the system, while the velocity can flow unconstrained. 
The guards in each location are the edges and the vertices that are common among the neighboring locations.

Each location has affine constant dynamics with drift. 
In detail, in each location $(i,j)$ of the hybrid automaton, the system evolves under the differential equation $\dot x = A x - B u(i,j)$ where the matrices $A$ and $B$ are
\[ A = \left [ \begin{smallmatrix}
0 & 0 & 1& 0 \\
0 & 0 & 0& 1 \\
0 & 0 & -1.2 & 0.1 \\
0 & 0 & 0.1 & -1.2
\end{smallmatrix} \right ]
\quad \mbox{ and } \quad
 B = \left [ \begin{smallmatrix}
0 & 0  \\
0 & 0  \\
 -1.2 & 0.1 \\
 0.1 & -1.2
\end{smallmatrix} \right ] \]
and the input in each location is 
$$u(i,j) = [\sin(\pi C(i,j)/4) \; \cos(\pi C(i,j)/4)]^T.$$ 
The array $C$ is one of the two parameters of the hybrid automaton that the user can control and it defines the input vector in each discrete location. 
Here, we consider the input array denoted in Fig. \ref{fig:bench}. 

The set of initial conditions is the set $H_0 = \{13\} \times [0.2 \; 0.8] \times [3.2 \; 3.8] \times [-0.4 \; 0.4]^2$ (green box in Fig. \ref{fig:bench}) 
%and the unsafe set is $\unsafe = \{4\}  \times [3.2 \; 3.8] \times [0.2 \; 0.8] \times \Re^2$ (red box in Fig. \ref{fig:bench}).
and the unsafe set is $\unsafe = \{4\}  \times \{x \in \Re^4 \; | \; ||x-(3.5 \; 0.5 \; 0 \; 0)|| \leq 0.3\}$ (red circle in Fig. \ref{fig:bench}). This is slightly modified from the original benchmark to simplify the programming of the $p_i$ functions.
Sample trajectories of the system appear in \ref{fig:bench} for initial conditions $[0.8 \; 3.2 \; -0.2 \; 0.35]^T$ (red trajectory) and $[0.4 \; 3.3 \; -0.1 \; -0.1]^T$ (blue trajectory).
Note that the two trajectories follow different discrete locations.

 \label{exm:navbench}
\end{exmp} }
{}
The methods compared are: SA+RED, pure Simulated Annealing (SA) \cite{NghiemSFIGP10hscc}, mixed mode-HSolver (mm-HSolver) \cite{hsolverRatschan}, and the reachability analysis tool SpaceEx\cite{FrehseCAV11}. 
Because ours is a falsification framework, SpaceEx is used as follows: for a given bound $j$ on the number of discrete jumps, SpaceEx computes  an \emph{over}-approximation $\overline{R(j)}$ of the set reachable in $j$ jumps $R(j): R(j) \subset \overline{R(j)}$. If $\overline{R(j)} \cap \unsafe$ is empty, then \emph{a fortiori} $R(j)\cap \unsafe$ is empty, and the system is safe if trajectories are restricted to $j$ jumps. If, however, $\overline{R(j)} \cap \unsafe \neq \emptyset$, no conclusion can be drawn. 

Because SA and SA+RED are stochastic methods, their behavior will be studied by analyzing a number of runs. A \emph{regression} will mean a fixed number of runs, all executed with the same set of parameters, on the same benchmark. 
mm-HSolver is deterministic, and thus one result is presented for benchmarks Nav1 and Nav2 (Nav0 was not tested by mm-HSolver's authors \cite{hsolverRatschan}). The mm-HSolver results are those reported in the literature. 
SpaceEx was run in deterministic mode on Nav0 (specifically, we set parameter `directions' = `box'\cite{FrehseCAV11}).

{\bf{Parameter setting}}: We set the test duration $T = 12$sec, which we estimate is long enough to produce a falsifying trajectory for Nav0 if one exists. 
For SA+RED, we chose to generate 10 samples ($|\Theta|$ = 10). We will see that even this small number 	is enough for the algorithm to be competitive. 
A regression consists of 20 jobs.
The SpaceEx parameters were varied in such a way that the approximation $\overline{R}$ of the reachable set $R$ became increasingly precise. 
\ifthenelse{\boolean{REPORT}}
{
Clustering\% was given the values 0, 20 and 80 (the smaller the Clustering\%, the better the approximation and the longer the runtime). 
The ODE solver timestep $\delta$ was given the values $0.0008, 0.02, 0.041$ seconds. These are, respectively, the minimum, median, and average values of $\delta$ used by the variable step-size ODE solver used by SA+RED. The smaller $\delta$, the better the approximation and the longer the runtime. 
The following parameters were fixed: `directions' = `box', `Local time horizon' = 10sec, rel-err = abs-err = 1.0e-10. 
The Nav0 SpaceEx configuration files can be obtained by request from the authors.
}
{See \cite{hya_gf_descentonline}.}

{\bf{The performance metrics}}: Each run produces a minimum robustness. For a given regression, we measure: the smallest, the average, and the largest minimum robustness found by the regression (min, avg, max in Table \ref{tb:REDvsSA}). The standard deviation of minimum robustness is also reported ($\sigma_f$). For SpaceEx, we had to simply assess whether $\overline{R(j)}$ intersected $\unsafe$ or not.

{\bf{The cost metric}}: Each run also counts the number of simulated trajectories in the course of its operation: SA simulates a trajectory for each proposed sample, SA+RED simulates a trajectory each time the constraint function of Prob\ref{eqHybridProg}[$W_i$] is evaluated (and for each sample), and mm-HSolver simulates trajectories in falsification mode. 
The trajectories simulated by SA and SA+RED have a common, fixed, pre-determined duration $T$. Thus the cost of these algorithms can be compared by looking at the Number of Trajectories (NT) each simulates (column $\overline{NT}$ in Table \ref{tb:REDvsSA} - the overline denotes an average). 
The trajectories computed by mm-HSolver have varying lengths, determined by a quality estimate. So for comparison, we report the number of single simulation steps ($SS$), i.e. the number of points on a given trajectory (column $\overline{SS}$ - mm-HSolver, being deterministic, has one value of $SS$). Unfortunately, $SS$ doesn't include the cost of doing verification in mm-HSolver, so it should be considered as a \emph{lower bound} on its computational cost. On the other hand, because of the choice of $T$, the $SS$ numbers reported for SA+RED should be treated as \emph{upper bounds}: choosing a shorter a-priori $T$ will naturally lead to smaller numbers. An exact comparison of the costs of SA+RED and mm-HSovler would require knowing the duration of the shortest falsifying trajectory, and setting the a-priori $T$ to that, and somewhat incorporating the cost of verification.
The operations that SpaceEx does are radically different from those of the other methods compared here. The only way to compare performance is through the runtime. 

{\bf{Experimental setup}}: we impose an upper limit $NT_{MAX}$ on $NT$: SA+RED is aborted when its $NT$ reaches this maximum, and SA is made to generate $NT_{MAX}$ samples. (Of course, SA+RED might converge before simulating all $NT_{MAX}$ trajectories). 3 values were chosen for $NT_{MAX}$: 1000, 3000 and 5000. For each value, a regression is run and the results reported.  This allows us to measure the competitiveness of the 2 algorithms (i.e. performance for cost). 

{\bf{Experiments}}: Table~\ref{tb:REDvsSA} compares SA+RED to SA: we start by noting that SA+RED falsified Nav2, whereas SA failed to so. On most regressions, SA+RED achieves better performance metrics than SA, for the same (or lower) computational cost. 
This is consistent whether considering best case (min), average case (avg) or worst case (max). 
There are 2 exceptions: for Nav1 and Nav2, $NT_{MAX} = 5000$ produces better average and max results for SA than for SA+RED. 
When running realistic system models, trajectory simulation is the biggest time consumer, so effectively $NT$ is the limiting factor. So we argue that these 2 exceptions don't invalidate the superiority of SA+RED as they occur for high values of $NT$ that might not be practical with real-world models.
(In these cases, we observed that SA eventually produces a sequence of samples whose trajectories finish by making a large number of jumps between locations 3 and 2, with a relatively high robustness. From there SA then produces a sample with 0 (or close to 0) robustness. This happens on every Nav1 run we tried, and most Nav2 runs, resulting in the numbers reported. The RED step in SA+RED seems to avoid these trajectories by `escaping' into local minima, and is worthy of further study.)

\begin{table*}[t]
\centering
\begin{tabular}{|c|c|c|c|c|c|}
\hline
System & $NT_{MAX}$      & $\overline{NT}$ & $\sigma_f$ & SA+RED Rob.    & SA Rob.\\
       &                 & ($\sigma_{NT}$) &            & min, avg, max & min, avg, max  \\
\hline \cline{2-5}
Nav0 & 1000 & 1004 (1.4) &  0.022  & 0.2852, 0.30,0.35& 0.2853,0.33,0.33 \\ 
     & 3000 & 2716 (651) &  0.019   &0.2852,0.29,0.32&  0.2858,0.31,0.36  \\ 
     & 5000 & 4220 (802) &  0.009 & 0.285,0.28,0.32& 0.286,0.32,0.35 	\\ 
\hline \cline{2-5}
Nav1 & 1000 & 662  (399)  & 0.21  & 0,0.43,0.65& 0,0.96,1.88 \\ 
     & 3000 & 1129 (1033) & 0.23  & 0,0.39,0.65& 0,0.99,1.80 \\ 
     & 5000 & 1723 (1770) & 0.23  & 0,0.38,0.68& 0,0,0 \\ 
\hline \cline{2-5}
Nav2 & 1000 & 902  (246)  & 0.32 	& 0,0.54,0.78& 0.3089,1.11,1.90  \\ 
     & 3000 & 1720 (1032) & 0.3 	& 0,0.53,0.83&  0.3305,1.29,1.95 \\ 
     & 5000 & 1726 (1482) & 0.27  & 0,0.62,0.79& 0,0.002,0.01 \\ 
\hline \cline{2-5}
Fosc & 1000 & 1000  (9.3)  & 0.024 	& 0.162,0.206,0.251 & 0.1666,0.216,0.271  \\ 
     & 3000 & 3000 (8.7) & 0.024 	& 0.163,0.203,0.270&  0.173,0.212,0.254 \\ 
     & 5000 & 5000 (11) & 0.028  & 0.167,0.193,0.258& 0.185, 0.218, 0.245 \\       
\hline
\end{tabular}
\caption{Comparison of SA and SA+RED. To avoid clutter, Robustness values are reported to the first differing decimal, with a minimum of 2 decimals. $\sigma_f$ is standard deviation of robustness for SA+RED.}
\label{tb:REDvsSA}
\end{table*}
%
%\ifthenelse{\boolean{REPORT}}
%{
%In Table~\ref{tb:RobvsNT} are shown min and max robustness in a regression, and the $NT$ values at which they were achieved.
%\begin{table}[t]
%\centering
%\begin{tabular}{|c|r@{, }l|r@{, }l|}
%\hline
%System & Rob& values & NT& at rob\\
%       & min& max & min& max  \\
%\hline 
%Nav0 & 0.2852&0.35 & 1003& 1005  \\ 
%     & 0.2852&0.32 & 3002& 48\\ 
%     & 0.2852&0.32 & 3759& 5005	\\ 
%\hline 
%Nav1 & 0&0.6598 & 0/24/250& 909/949\\ 
%     & 0&0.6528 & 0/0/24/1712& 1634    \\ 
%     & 0&0.6849 & 7/114/1320/2184& 1104\\ 
%\hline 
%Nav2 & 0&0.7882 & 52/995& 937	 \\ 
%     & 0&0.8388 & 187/830/2738& 669	 \\ 
%     & 0&0.7926 & 50/1598/4515& 633\\ 
%\hline
%\end{tabular}
%\caption{Robustness vs NT for SA+RED. Some robustness values are achieved at more than one NT, which are separated by a `/'.}
%\label{tb:RobvsNT}
%\end{table}
%}
%{}

Table \ref{tb:resREDvsHSolver} compares SA+RED to mm-HSolver. We note that SA+RED falsifies the benchmarks, as does mm-HSolver. 
For Nav1, $\overline{SS}$ is greater than mm-HSolver's $SS$, though the falsifying runs have $SS$ values (last column) both smaller and larger than mm-HSolver. For Nav2, which appears to be more challenging, SA+RED performed better on average than mm-HSolver. However, we point out again that exact comparison is hard.
% To see whether we could get $\overline{SS}$ in the order of 5454, we ran SA+RED on Nav1 with $T = 3$sec; $\overline{SS}$ was, expectedly, smaller, but still exceeded the target figure of 5454. So we chose to report the $T = 12$sec results for consistency with the results of Nav2. 

\begin{table*}[t]
\centering
\begin{tabular}{|c|c|c|c|c|c|c|}
\hline
System & $NT_{MAX}$ & $\overline{SS}$ & $\sigma_f$ & SA+RED Rob    & mm-HSolver      & $SS$ at min Rob   \\
       &            & ($\sigma_{SS}$) &            &min, avg, max  & Rob, $NT$, $SS$ & for SA+RED \\
\hline 
Nav1 & 1000         &  47k (30k) 	& 0.21 	 & 0,0.43,0.65 & 0, 22,5454           & 0,1560,16k \\ 
     & 3000 				&  79k (76k) 	&0.23  	 & 0,0.39,0.65  &                      &0,0,1600, 127k\\ 
     & 5000 				&  143k (141k)&0.23 	 & 0,0.38,0.68 &											 &7660, 38k, 102k, 159k\\ 
\hline 
Nav2 & 1000 				&  63k (18k) 	&0.32    & 0,0.54,0.78 & 0, 506, 138k          &2888, 74k\\ 
     & 3000 				&  126k (80k) & 0.3 	 & 0,0.53,0.83  &  											&14k,  57k, 210k\\ 
     & 5000 				&  124k (114k)& 0.27   & 0,0.622,0.79	&  											& 3450, 121k, 331k \\ 
\hline
\end{tabular}
\caption{Comparison of SA+RED and mm-HSolver. The last column shows some of the $SS$ values at which min robustness is achieved by SA+RED on various runs.}
\label{tb:resREDvsHSolver}
\end{table*}

For SpaceEx running on Nav0, we observed that our initial parameter set produces an $\overline{R(j)}$ that intersects $\unsafe$. Since this is inconclusive, we modified the parameters to get a better approximation. 
For parameter values (Clustering\%, $\delta$) = $(0,0.0008)$, $\overline{R(j)}$ and $\unsafe$ were almost tangent, but SpaceEx runtimes far exceeded those of SA+RED (more than 1.5 hours).
Moreover, SpaceEx did not reach a fixed point of its iterations (we tried up to $j = 200$ iterations before stopping due to high runtimes). Thus, we can not be sure that all of the reachable space was covered. While this may be seen as an analogous problem to the choice of $T$ in SA+RED, the computational cost of increasing $j$ is much more prohibitive than that of increasing $T$.
\ifthenelse{\boolean{REPORT}}
{
We now present some detailed runtime results. 
For SA+RED, `runtime' means the User time reported by the Unix \emph{time} utility. SA+RED was run on a dedicated Intel Xeon processor, x86-64 architecture, under the Unix OS. SpaceEx reports its own runtime. It was run on a Dual-Core Intel Centrino processor, under a Windows7 64b OS, with no other user applications running.

\begin{table*}[t]
\centering
\begin{tabular}{|c|c|c|c|c|c|}
\hline
Clustering\% &\multicolumn{3}{c|}{$\delta$(sec)} & SA+RED Runtime (sec) & $NT_{MAX}$ \\
             & 0.0008   & 0.002 & 0.041          & min,avg,max          &  \\
\hline 
80           & 737       &  30 	& 15 	           & 324, 426, 596        & 1000\\ 
20           & 1066      &  53 	& 33 	           & 620, 1132, 1385 			& 3000\\ 
10           & 1460      &  NA 	& NA	           & 767,1617, 2216 			& 5000\\  
0           & $> 5400$   &  NA 	& NA	           &  										&\\    
\hline
\end{tabular}
\caption{Comparison of SA+RED and SpaceEx runtimes. NA means the experiment was not run, because a more accurate run was required. The right-most columns shows the $NT_{MAX}$ constraint for which the SA+RED runtimes were obtained.}
\label{tb:resREDvsSpaceEx}
\end{table*}

%Fig.\ref{fig:spaceex} shows an example of an $\overline{R}$, some of whose elements have $x_2 \leq 0.8$, and thus intersects $\unsafe = \{x \in \Re^n | \|x-[3.5\; 0.5]\| \leq 0.3\}$. 
%
%\begin{figure}[t]
%\begin{center}
%\includegraphics[width=300pt,height=200pt]{}
%\caption{[Nav0 reachable space] Approximated by SpaceEx with parameters Max. Iterations = 100, $\delta = 0.08$sec, Clustering\% = 20.} 
%\label{fig:spaceex}
%\end{center}
%\end{figure}
} % if REPORT
{Details can be found in the technical report\cite{hya_gf_descentonline}.}
Thus we may conclude that stochastic falsification and reachability analysis can play complementary roles in good design practice: first, stochastic falsification computes the robustness of the system with respect to some unsafe set. Guided by this, the designer may make the system more robust, which effectively increases the distance between the (unknown) reachable set and the unsafe set. Then the designer can run a reachability analysis algorithm where coarse over-approximations can yield conclusive results.

%{\bf{Empirical observations}}: Nav1 and Nav2 are particularly challenging for SA+RED: for Nav2, many trajectories will bounce back and forth between locations 3 and 2 (without being Zeno). This results in robustness ellipsoids so small that it is not worth doing a search there, and limits the efficiency of SA+RED. More generally, SA+RED depends on the ability to do search in the robustness ellipsoids, so not all systems are amenable to this method. For Nav1, SA+RED tends to drive towards the edge of the initial set, resulting in proposed samples being very close to each other, thus effectively getting stuck in a local minimum. This in turn causes the algorithm to waste many cycles (re-)discovering near-identical solutions. Nav1 also suffers (though to a lesser degree) from the near-Zeno behavior. While one can conceive of some heuristic to avoid this, we leave it for future work to better understand the theoretical characteristics of this approach (see Conclusions).

	\section{Conclusions}\label{secConclusions}

The minimum robustness of a hybrid system is an important indicator of how safe it is. 
In this paper, we presented an algorithm for computing a local minimum of the robustness for a certain class of linear hybrid systems. 
The algorithm can also be used to minimize the robustness of non-hybrid linear dynamic systems. 
When integrated with a higher-level stochastic search algorithm, the proposed algorithm has been shown to perform better than existing methods on literature benchmarks, and to complement reachability analysis.
We will next deploy this capability to perform local descent search for the falsification of arbitrary linear temporal logic specifications, not only safety specifications.
This investigation opens the way to several interesting research questions.
Most practically, reducing the number of tests $NT$ results in an immediate reduction of the computation cost. 
Also useful, 	is the determination of an appropriate test duration $T$, rather than a fixed arbitrary value.
 
In terms of performance guarantees, obtaining a lower bound on the optimum achieved in Problem \ref{eqHybridProg} could lead to a lower bound on the optimal robustness.
One level higher in the algorithm, it is important to get a theoretical understanding of the behavior of the Markov chains iterated by SA+RED to further improve it.

%lower bound on F(z*) by duality
%
%reduce computational cost (replace SQP by a heuristic?)
%
%understand better the initial set coverage achieved by this
%
%relax conditions on local convergence (use optimality functions)
%
%non-differentiable systems (already have solutin for dynamical system, extend to hybrid)
%
%What can we do if ellipsoid is too small?
		
\bibliographystyle{splncs}
\bibliography{fainekos_bibrefs}

\end{document}